\documentclass[journal]{IEEEtran}
\usepackage{amsmath}
\usepackage{amssymb}
\usepackage{amsthm}
\usepackage{bm}
\usepackage{graphicx}
\usepackage[hyphens]{url}
\IfFileExists{cite.sty}{\usepackage{cite}}{}

\makeatletter
\g@addto@macro\UrlBreaks{%
\do\/\do-\do\.\do\_\do\?\do\&\do=\do\#\do\%\do\~}
\makeatother
\newtheorem{theorem}{Theorem}

\title{Locational Marginal Pricing for Adaptive Robust Look-Ahead Dispatch with Causal Affine Recourse}

\author{
Aidan Looney,~\IEEEmembership{Graduate Student Member,~IEEE,}
Qian Zhang,~\IEEEmembership{Graduate Student Member,~IEEE,}
and\\
Le Xie,~\IEEEmembership{Fellow,~IEEE}%
\thanks{A. Looney, Q. Zhang, and L. Xie are with the John A. Paulson School of
Engineering and Applied Sciences, Harvard University, Allston, MA, USA. Corresponding author: Aidan Looney (aidanlooney@g.harvard.edu).}
}

\begin{document}

\maketitle

\begin{abstract}
This paper develops a marginal pricing mechanism for adaptive robust look-ahead economic dispatch (LAED) under net-load uncertainty. In current market practice, deterministic multi-interval dispatch can misprice flexibility when forecast error is large. Fully adaptive robust (FAR) dispatch captures this uncertainty, but competing worst-case trajectories can imply different marginal values of demand, leaving no single price for market settlement. We propose causal affine recourse (CAR) as a tractable, price-forming approximation. CAR replaces independently optimized trajectory-specific schedules with a causal affine response policy, yielding a market-clearing model that retains the standard energy and congestion decomposition of DC Locational Marginal Pricing. We then demonstrate that CAR-LMP together with a ramp-adjusted current settlement supports dispatch-following and eliminates current-period lost opportunity costs. We solve the CAR problem by deriving a robust counterpart and present computational evidence on exact small DC instances, a ten-generator system, and an IEEE 300-bus public-network case. These simulations show that the proposed prices collapse exactly to deterministic LAED-LMP when robustness is inactive, remain stable in loose-ramp regimes, and produce economically meaningful price shifts when flexibility is scarce. The proposed pricing mechanism provides a practical bridge between adaptive robust dispatch and market-based pricing.
\end{abstract}

\begin{IEEEkeywords}
electricity markets, adaptive robust optimization, look-ahead economic dispatch,
locational marginal pricing, ramping flexibility
\end{IEEEkeywords}

\section{Introduction}

High renewable penetration and increasingly uncertain demand increase net-load volatility and the operational value of ramping capability. A dispatch that is inexpensive in the current interval can create costly or infeasible ramping obligations later, motivating Independent System Operators (ISOs) to adopt rolling-window look-ahead dispatch formulations that co-optimize current and future intervals \cite{xie2008,price2011,hua2019,zhao2020}. Although look-ahead economic dispatch (LAED) internalizes these intertemporal constraints, the traditional Locational Marginal Price (LMP) can fail to support dispatch-following incentives without adjusted settlements \cite{guo2021}. This issue has motivated multi-interval pricing methods \cite{thatte2014,hua2019,zhao2020,guo2021,Chen2021,hogan2020} and explicit ramp or flexibility products in markets such as MISO and CAISO \cite{navid2013,ela2016,wang2017review,wu2016,zhang2026_PSCC}.

Most LAED pricing analyses remain tied to deterministic forecasts even though real-time dispatch faces renewable-generation and demand uncertainty. Stochastic and risk-sensitive models use scenario distributions or tail-risk measures \cite{conejo2010,gu2017,madavan2026_risk_robust_sced}, whereas robust optimization protects against a prescribed uncertainty set \cite{bertsimas2013scuc,lorca2016uc,lorca2015,bertsimas2025pricing}. Standard robust models, referred to here as non-adaptive robust (NAR) models, can be overly conservative because their future decisions do not adjust to realized uncertainty. Adaptive robust optimization (ARO), by contrast, allows future recourse decisions after the current dispatch is fixed and the uncertainty has been observed \cite{bertsimas2022robust}.

ARO has been applied to several power-system applications including two-stage unit commitment (UC), reserve coordination, and LAED. Bertsimas \emph{et al.}\ proposed adaptive robust UC in \cite{bertsimas2013scuc} and Bertsimas and Koulouras \cite{bertsimas2025pricing} address pricing in this problem by deriving LMPs and uplift rules for adaptive robust UC under linear decision rules. Related robust UC work develops uncertainty marginal prices and transmission-reserve interpretations \cite{ye2017_ump}. Warrington \emph{et al.}\ use linear decision rules to co-optimize nominal schedules and affine reserve responses and price the resulting policy-based reserve products \cite{warrington2013_ldr_reserves}. Lorca and Sun \cite{lorca2015} propose adaptive robust LAED with dynamic uncertainty sets, but do not address its marginal-demand pricing or market settlement.

We take the two-stage adaptive robust LAED \cite{lorca2015}, denoted FAR, as the theoretical optimality reference for cost and dispatch. When simultaneously active worst-case demand trajectories have different current-dispatch sensitivities, its current-demand marginal can be set-valued. CAR instead restricts the recourse to a causal affine policy as the price-forming market model and, under DC constraints, retains the standard energy and congestion LMP decomposition. Unlike adaptive robust UC pricing \cite{bertsimas2025pricing} and policy-based reserve pricing \cite{warrington2013_ldr_reserves}, we address marginal energy pricing and current-period settlement for adaptive robust LAED.

The contributions of this paper are threefold.
\begin{enumerate}
    \item We characterize the current-period marginal-demand subdifferential of the fully adaptive formulation in~\cite{lorca2015} and show how active worst-case trajectories with different current-dispatch sensitivities can produce a non-singleton price set.
    \item We formulate CAR as a tractable price-forming model and show that its shared-policy structure prevents competing worst-case realizations from producing different marginal values of demand. Subject to the ordinary dual-uniqueness qualification of deterministic market clearing, CAR yields a single, scenario-independent DC LMP with the standard energy and congestion decomposition. We also derive a ramp-adjusted current settlement that supports dispatch-following incentives and eliminates current-period lost opportunity cost.
    \item We demonstrate computationally that CAR collapses to deterministic LMP when robustness is inactive, produces scarcity-driven price and lost-opportunity-cost effects under tight ramping, and avoids NAR conservatism while retaining a measurable gap to FAR.
\end{enumerate}

\section{Dispatch Models}
\label{sec:dispatch_models}

In dispatch pricing problems, we separate the current interval from the remaining look-ahead window, consistent with the two-stage master/recourse structure used in ARO and exemplified by \cite{lorca2015}. Let $x$ denote the generation dispatch implemented at the current time $t$. A rolling look-ahead solve optimizes the current interval together with $H$ future intervals. Define the time windows, indexed by $\tau$, as
\[
    \mathcal W := \{0,\ldots,H\},
    \qquad
    \mathcal W^+ := \{1,\ldots,H\}.
\]
For each interval $t+\tau$, let $\hat d_{t+\tau}\in\mathbb R^{|\mathcal N|}$ denote the nominal nodal net-load vector. We define the full look-ahead nominal nodal net-load trajectory and the future net-load by
\begin{align*}
&\hat d
:=
\left(
\hat d_t,\hat d_{t+1},\ldots,\hat d_{t+H}
\right), \qquad \hat d^{+}
:=
\left(
\hat d_{t+1},\ldots,\hat d_{t+H}
\right).
\end{align*}

\subsection{Nominal Look-Ahead Economic Dispatch}

The nominal networked LAED jointly selects the generation vector $x_{t+\tau}:=(x_{g,t+\tau})_{g\in\mathcal G}$ and nodal load-shortage vector
$s_{t+\tau}:=(s_{n,t+\tau})_{n\in\mathcal N}$ over the look-ahead horizon:
\begin{equation}
\begin{aligned}
V_{\mathrm N}(\hat d):={}&
\min_{\substack{x_{t+\tau},\,s_{t+\tau}\\
                \tau\in\mathcal W}}
\sum_{\tau\in\mathcal W}
\left(
    c^\top x_{t+\tau}
    +C^{\rm shed}\mathbf 1^\top s_{t+\tau}
\right)
\\
&\text{s.t.}\quad \forall\tau\in\mathcal W:
\\[-0.2em]
&\quad
\mathbf 1^\top
\left(
    Gx_{t+\tau}+s_{t+\tau}-\hat d_{t+\tau}
\right)=0,
\\
&\quad
-\overline f
\leq
\Pi\left(
    Gx_{t+\tau}+s_{t+\tau}-\hat d_{t+\tau}
\right)
\leq\overline f,
\\
&\quad
\underline x
\leq x_{t+\tau}
\leq\overline x,
\qquad
s_{t+\tau}\geq0,
\\
&\quad
-\underline r
\leq x_{t+\tau}-x_{t+\tau-1}
\leq\overline r.
\end{aligned}
\label{eq:nominal-laed}
\end{equation}
Here $c$ is the vector of marginal generation costs, $C^{\rm shed}$ is the value of lost load, $G$ is the generator-to-bus incidence matrix, $\Pi$ is the Power Transfer Distribution Factor (PTDF) matrix, and $\hat d_{t+\tau}$ is the nominal nodal net-load forecast. All capacity, ramping, and line-flow inequalities are applied componentwise. The interval $\tau=0$ is implemented, and its ramp constraints couple $x_t$ to the previously implemented dispatch $x_{t-1}$, which is treated as a parameter.

\subsection{Fully Adaptive Robust Look-Ahead Dispatch}

The nominal LAED formulation optimizes against a single forecast trajectory. The adaptive robust formulation instead fixes only the dispatch implemented in the current interval and evaluates future operating cost against an uncertainty set of admissible future net-load trajectories. Thus the current dispatch is the first-stage decision, while future dispatch and shortage decisions are second-stage recourse actions.

For the robust and pricing derivations, let $x$ abbreviate the implemented current dispatch vector $x_t$, let $s_t$ denote the current shortage, and let $y$ collect all future dispatch, shortage, and auxiliary DC-network variables over $\mathcal W^+$. In an equivalent compact nodal-balance representation, the future part of the nominal LAED can be written as
\[
    Ay=\hat d^{+},\qquad Fy\le f+Ex.
\]
Here $A$ collects the future nodal balance equations, while $Fy\le f+Ex$ collects future capacity, ramping, line-flow, shortage, and other operating constraints. The matrix $E$ denotes the dependence of future inequality right-hand sides on the implemented current dispatch.

Future uncertainty is modeled as a deviation from the nominal future net-load trajectory. Let $\xi\in\Xi\subset\mathbb R^m$ denote the primitive uncertainty vector, where $\Xi$ is compact, convex, and contains the origin. The matrix
$M\in\mathbb R^{H|\mathcal N|\times m}$ maps primitive uncertainty into stacked nodal future net-load deviations, so the realized future trajectory is
\[
    \hat d^{+}+M\xi.
\]
The theoretical results below apply to any fixed compact convex uncertainty set $\Xi$. For marginal pricing, $\Xi$ and its loading matrix $M$ are calibrated before market clearing and held fixed when nominal demand is perturbed.

For a fixed current dispatch $x$ and uncertainty realization $\xi$, the future recourse problem is
\begin{equation}
    q(x,\xi;\hat d^+)
    :=
    \begin{aligned}[t]
        \min_y \quad & b^\top y \\
        \text{s.t.}\quad
        \lambda:\quad
        & Ay = \hat d^+ + M\xi, \\
        \mu:\quad
        & Fy \le f + Ex .
    \end{aligned}
    \label{eq:recourse-lp}
\end{equation}
Here $b^\top y$ is the future operating cost, and $\lambda$ and $\mu$ denote the dual multipliers associated with the future balance equations and operating inequalities, respectively.

The fully adaptive robust continuation value is
\begin{equation}
    Q_{\mathrm{FAR}}(x;\hat d^{+})
    :=
    \max_{\xi\in\Xi}
    q(x,\xi;\hat d^{+}).
    \label{eq:Q-fa}
\end{equation}
The corresponding fully adaptive robust LAED problem is
\begin{equation}
\begin{aligned}
V_{\mathrm{FAR}}(\hat d_t,\hat d^+)
:=
\min_{(x,s_t)\in\mathcal X_t(\hat d_t)}\quad
&c^\top x+C^{\rm shed}\mathbf 1^\top s_t \\
&+Q_{\mathrm{FAR}}(x;\hat d^{+}),
\end{aligned}
    \label{eq:full-adaptive}
\end{equation}
where $\mathcal X_t(\hat d_t)$ denotes the feasible set for the implemented current interval in variables $(x,s_t)$, including current capacity, backward-ramping, system-balance, network, and shortage-nonnegativity constraints. Its dependence on current nominal demand is shown explicitly because the current-period marginal price is obtained by perturbing $\hat d_t$ while holding $\hat d^+$ and the calibrated uncertainty model fixed.

When $\Xi$ is a compact polyhedron with extreme points $\{\xi^{(1)},\ldots,\xi^{(K)}\}$, the fully adaptive continuation value admits the exact master representation \cite[Th.~2.65]{ruszczynski2006_nonlinear}:
\begin{equation}
\label{eq:far-master-extreme}
\begin{alignedat}{3} Q_{\mathrm{FAR}}(x;\hat d^{+}) &= \min_{\eta,\{y^{(k)}\}_{k=1}^K} \eta & \quad \\ \text{s.t.}\quad & \eta \ge b^\top y^{(k)}, &\qquad& k=1,\ldots,K, \\ & Ay^{(k)} = \hat d^{+}+M\xi^{(k)}, && k=1,\ldots,K, \\ & Fy^{(k)} \le f+Ex, && k=1,\ldots,K. \end{alignedat}
\end{equation}
Each extreme realization receives its own recourse copy $y^{(k)}$, and the epigraph variable $\eta$ captures the worst-case future operating cost. In practice, the relevant extreme realizations can be generated iteratively through a separation problem, as in \cite{lorca2015}.

Each $y^{(k)}$ is selected for a complete future uncertainty trajectory $\xi^{(k)}$. Thus, \eqref{eq:far-master-extreme} is a two-stage fully adaptive optimality reference where only the current dispatch is implemented, and the problem is re-solved in the next rolling window.

For the computational studies, $\Xi$ is a fixed nodal dynamic-budget forecast-error set over uncertain buses $\mathcal N_u$. For each uncertain bus $n$ and future interval $\tau=1{:}H$, $\xi_{n,t+\tau}$ denotes the net-load deviation. The uncertainty vector stacks these deviations across buses and time, and $M$ embeds them into the full stacked nodal net-load vector. Auxiliary variables $u_{n,t+\tau}$ bound innovations in the forecast-error process but are not recourse-policy variables. The dynamic budgeted uncertainty set is given as follows:
\begin{subequations}
\label{eq:dynamic-uncertainty-set}
\begin{align}
|\xi_{n,t+1}|
&\le u_{n,t+1},
&& n\in\mathcal N_u,
\label{eq:dyn-init}\\
|\xi_{n,t+\tau}-\rho\xi_{n,t+\tau-1}|
&\le u_{n,t+\tau},
&& n\in\mathcal N_u,\ \tau=2{:}H,
\label{eq:dyn-persistence}\\
0\le u_{n,t+\tau}
&\le \bar\sigma_{n,t+\tau},
&& n\in\mathcal N_u,\ \tau=1{:}H,
\label{eq:dyn-bounds}\\
\sum_{\tau=1}^{H}
\sum_{n\in\mathcal N_u}
\frac{u_{n,t+\tau}}{\bar\sigma_{n,t+\tau}}
&\le \Gamma,
\label{eq:dyn-budget}\\
\hat d_{n,t+\tau}+\xi_{n,t+\tau}
&\ge 0,
&& n\in\mathcal N_u,\ \tau=1{:}H.
\label{eq:dyn-lower}
\end{align}
\end{subequations}
Here $\rho\in[0,1)$ controls temporal persistence, $\Gamma$ is the total normalized innovation budget, and
\[
    \bar\sigma_{n,t+\tau}
    =
    \sigma_{\mathrm{rel}}
    \max\{|\hat d_{n,t+\tau}|,1\}\sqrt{\tau},
\]
measures the nodal innovation budget. Constraint \eqref{eq:dyn-lower} prevents the realized nodal net load from being driven below zero. This construction adapts the dynamic uncertainty-set architecture of \cite{lorca2015}, but instead of modeling wind and load forecast errors separately as they do, we model nodal net-load deviations. The dynamic-budget studies use the persistence parameter $\rho=0.8$, with all calibration parameters fixed before market clearing.

\subsection{Causal Affine Recourse Approximation}

We now introduce causal affine recourse as a price-forming approximation of the FAR formulation. CAR replaces the independently optimized trajectory-specific recourse actions with one affine policy and imposes block-lower-triangular causality. Partition $\xi$ into time blocks $\xi=(\xi_{t+1},\dots,\xi_{t+H})$ and restrict the response matrix $Y$ to block-lower-triangular form. The dispatch and shortage decisions at time $t+\tau$ then depend only on uncertainty revealed through that interval, enforcing non-anticipativity. The causal affine policy is
\begin{equation}
    y(\xi)=y^0+Y\xi.
    \label{eq:affine-policy}
\end{equation}
The resulting affine robust problem is
\begin{align}
V_{\mathrm{Aff}}(\hat d):={}&
\min_{\substack{(x,s_t)\in\mathcal X_t(\hat d_t)\\
                y^0,Y,\eta}}
\left\{
c_t^\top x
+C^{\rm shed}\mathbf 1^\top s_t
+\eta
\right\}
\label{eq:affine-problem}
\\
&\text{s.t.}\quad
A(y^0+Y\xi)
=\hat d^+ +M\xi,
\quad \forall\xi\in\Xi,
\label{eq:affine-feas-1}
\\
&\phantom{\text{s.t.}\quad}
F(y^0+Y\xi)
\leq f+Ex,
\quad \forall\xi\in\Xi,
\label{eq:affine-feas-2}
\\
&\phantom{\text{s.t.}\quad}
\eta
\geq b^\top(y^0+Y\xi),
\quad \forall\xi\in\Xi,
\label{eq:affine-feas-3}
\\
&\phantom{\text{s.t.}\quad}
Y\in\mathcal Y_{\mathrm{causal}}.
\label{eq:affine-feas-4}
\end{align}
Constraints \eqref{eq:affine-feas-1}--\eqref{eq:affine-feas-3} enforce, respectively, power balance, operating feasibility, and the worst-case affine-cost epigraph for every $\xi\in\Xi$. The operating constraints in
\eqref{eq:affine-feas-2} include capacity, ramping, network, and shortage limits, while $\mathcal Y_{\mathrm{causal}}$ contains the block lower triangular response matrices.

Analogously to \eqref{eq:Q-fa}, define the fixed-$x$ affine continuation value as
\begin{equation}
    Q_{\mathrm{Aff}}(x;\hat d^+)
    :=
    \min_{y^0,Y,\eta}
    \left\{
        \eta :
        \eqref{eq:affine-feas-1}\text{--}\eqref{eq:affine-feas-4}
    \right\},
    \label{eq:q-aff}
\end{equation}

The two-stage FAR formulation provides the dispatch optimality reference, while CAR is the causal market-clearing formulation used to price the marginal value of demand. Its support-function robust counterpart \cite{bertsimas2022robust} exposes the nominal-balance and network-constraint multipliers directly. Specifically, each robust affine inequality $a_0+a^\top\xi\le0$ is replaced by $a_0+\sigma_\Xi(a)\le0$, where $\sigma_\Xi(a):=\max_{\xi\in\Xi}a^\top\xi$, and robust equalities are enforced by coefficient matching. Under polyhedral uncertainty of the form $\Xi=\{\xi:H_\Xi\xi\le h_\Xi\}$, the robust counterpart becomes
\[
\sigma_\Xi(a)=\min_{\nu\ge0}\{h_\Xi^\top\nu:H_\Xi^\top\nu=a\}.
\]
This reformulation converts \eqref{eq:affine-problem} into the finite-dimensional linear program used in the computational studies.

Affine decision rules are exact under some special structures \cite{marandi2018}, but not in general. Even when uncertainty appears only in the right-hand sides, affine policies can have an arbitrarily large optimality gap \cite{xu2018}. Networked LAED contains uncertain balance equations and uncertainty-dependent ramping and transmission constraints. Section~\ref{sec:computational_simulations} therefore compares CAR with FAR on selected cases.

\section{Set-Valued Marginal Pricing Under FAR}
\label{sec:fa-set-valued-prices}

For current-period settlement, the relevant value is $V_{\mathrm{FAR}}(\hat d_t,\hat d^+)$ as a function of current nominal nodal demand, holding $\hat d^+$, $\Xi$, $M$, bids, and network data fixed. Active worst-case trajectories affect this value through their sensitivities to the implemented dispatch $x$. Binding intertemporal constraints can transmit different sensitivities to the current-stage Karush--Kuhn--Tucker (KKT) conditions, producing multiple current-period marginal prices.

Let $\Xi$ be a compact polyhedron with extreme points $\{\xi^{(1)},\ldots,\xi^{(K)}\}$. For each extreme point, define the scenario recourse value
\begin{equation}
\begin{aligned}
\phi_k(x;\hat d^+)
:=
\min_y\quad &b^\top y\\
\text{s.t.}\quad
&Ay=\hat d^+ +M\xi^{(k)},\\
&Fy\leq f+Ex,
\end{aligned}
\qquad k=1,\ldots,K.
\label{eq:scenario-value}
\end{equation}
Since the recourse value is convex in $\xi$, its maximum over $\Xi$ is attained at an extreme point \cite[Th.~2.65]{ruszczynski2006_nonlinear}. Therefore,

\begin{equation}
Q_{\mathrm{FAR}}(x;\hat d^+)
=
\max_{k=1,\ldots,K}
\phi_k(x;\hat d^+).
\label{eq:qfa-max-scenarios}
\end{equation}
Where finite in a neighborhood, each $\phi_k$ is convex and polyhedral, hence piecewise affine, as a function of $(x,\hat d^+)$, while $Q_{\mathrm{FAR}}$ is their pointwise maximum \cite[Prop.~3.55(b)]{rockafellar1998}. Active worst-case scenarios can therefore have different sensitivities to the implemented dispatch. The following theorem relates these sensitivities to the set of current-period marginal prices.

\begin{theorem}[Current-period marginal prices under FAR]
\label{thm:far-current-price}
Consider the FAR problem \eqref{eq:full-adaptive} with the extreme-point representation \eqref{eq:far-master-extreme}. Suppose that the finite master problem and its dual attain a common finite optimal value, that $V_{\mathrm{FAR}}(\cdot,\hat d^+)$ is finite in a neighborhood of $\hat d_t$, and that every $\phi_k(\cdot;\hat d^+)$ is finite in a neighborhood of every optimal current dispatch.

Let $\mathcal Z_{\mathrm{FAR}}^\star(\hat d_t,\hat d^+)$ denote the set of dual-optimal multiplier vectors of the finite master problem. For $\zeta\in\mathcal Z_{\mathrm{FAR}}^\star$, let $\omega_t(\zeta)$ be the multiplier on the current balance constraint $\mathbf 1^\top\hat d_t-\mathbf 1^\top(Gx+s_t)=0$, and let $\mu_t^+(\zeta),\mu_t^-(\zeta)\geq0$ be the multipliers on the current upper and lower line-flow constraints. Define
\[
p_t(\zeta)
:=
\omega_t(\zeta)\mathbf 1
-\Pi^\top
\left(
\mu_t^+(\zeta)-\mu_t^-(\zeta)
\right).
\]
Then the set of current-period marginal prices is
\begin{equation}
\partial_{\hat d_t}
V_{\mathrm{FAR}}(\hat d_t,\hat d^+)
=
\left\{
p_t(\zeta):
\zeta\in
\mathcal Z_{\mathrm{FAR}}^\star
(\hat d_t,\hat d^+)
\right\}.
\label{eq:far-current-subdifferential}
\end{equation}
Consequently, FAR yields a unique current-period LMP if and only if $p_t(\zeta)$ is constant over $\mathcal Z_{\mathrm{FAR}}^\star$.

For any optimal current dispatch $x^\star$, define the active worst-case set

\[
\mathcal A(x^\star,\hat d^+)
:=
\arg\max_{k=1,\ldots,K}
\phi_k(x^\star;\hat d^+).
\]
For each scenario $k$,
\begin{equation}
\partial_x\phi_k(x^\star;\hat d^+)
=
\left\{
E^\top\mu:
(\lambda,\mu)
\text{ is dual optimal}
\right\}.
\label{eq:scenario-current-sensitivity}
\end{equation}

Here dual optimality refers to the dual of \eqref{eq:scenario-value}, evaluated at $(x^\star,\hat d^+)$ for scenario $k$. The FAR recourse subdifferential is therefore
\begin{equation}
\partial_xQ_{\mathrm{FAR}}(x^\star;\hat d^+)
=
\operatorname{conv}
\left(
\bigcup_{k\in\mathcal A(x^\star,\hat d^+)}
\partial_x\phi_k(x^\star;\hat d^+)
\right).
\label{eq:far-recourse-subdifferential}
\end{equation}
Equivalently, any recourse subgradient $g^\star\in \partial_xQ_{\mathrm{FAR}}(x^\star;\hat d^+)$ appearing in the current-dispatch KKT conditions can be written as
\begin{equation}
\begin{aligned}
g^\star
&=
\sum_{k\in\mathcal A(x^\star,\hat d^+)}
\theta_k g_k,\\
\sum_{k\in\mathcal A(x^\star,\hat d^+)}
\theta_k
&=1,
\qquad
\theta_k\geq0,
\qquad
g_k\in
\partial_x\phi_k(x^\star;\hat d^+).
\end{aligned}
\label{eq:far-sensitivity-combination}
\end{equation}
The weights $\theta_k$ can be chosen as the optimal multipliers on the scenario epigraph constraints in \eqref{eq:far-master-extreme}.
\end{theorem}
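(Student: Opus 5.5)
The proof treats the finite master problem as a linear program parameterized by $\hat d_t$ and applies LP sensitivity theory. Substituting the extreme-point representation \eqref{eq:far-master-extreme} into \eqref{eq:full-adaptive} gives one joint LP in $(x,s_t,\eta,\{y^{(k)}\})$.

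\textbf{Step 1 (right-hand-side structure).} In this joint LP, $\hat d_t$ enters only the right-hand sides. It appears in the current balance row $\mathbf 1^\top\hat d_t-\mathbf 1^\top(Gx+s_t)=0$ and in the two current line-flow rows $\pm\Pi(Gx+s_t-\hat d_t)\le\overline f$. Everything else stays fixed when $\hat d_t$ is perturbed: the recourse copies, $\hat d^+$, $M$ and $\Xi$.

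\textbf{Step 2 (current-price subdifferential \eqref{eq:far-current-subdifferential}).} We use the standard result that the optimal value of an LP is convex and polyhedral in its right-hand side $\beta$ wherever it is finite. We also use that, under strong duality, its subdifferential at an interior point of its domain equals the set of dual-optimal multipliers on $\beta$. The hypotheses on common finite value and finiteness in a neighbourhood are exactly what this requires. Writing the right-hand side as a linear map $\beta=L\hat d_t+\beta_0$, the chain rule for convex functions composed with affine maps gives $\partial_{\hat d_t}V_{\mathrm{FAR}}=L^\top\partial_\beta v$. No qualification condition is needed here, since $v$ is polyhedral and finite near $\beta$.

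Computing $L^\top\zeta$ row by row produces $p_t(\zeta)$. The balance row contributes $\omega_t\mathbf 1$. The upper and lower flow rows contribute $-\Pi^\top(\mu_t^+-\mu_t^-)$, with the sign fixed by writing both rows in $\le$ form. Uniqueness of the LMP then holds exactly when this image set is a singleton, which is the stated corollary.

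\textbf{Step 3 (scenario sensitivity \eqref{eq:scenario-current-sensitivity}).} Fix $k$ and apply the same LP argument to $\phi_k$, where $x$ enters only through the right-hand side $f+Ex$. The dual objective is $\lambda^\top(\hat d^++M\xi^{(k)})-\mu^\top(f+Ex)$ in the convention $\mu\ge0$, and its $x$-gradient is $-E^\top\mu$. I will fix the sign convention of $\mu$ and $E$ so that the dual-optimal set maps to $E^\top\mu$ as stated, for instance by writing the constraint as $Fy-Ex\le f$. Finiteness of $\phi_k$ near $x^\star$ puts $x^\star$ in the interior of the domain, so the subdifferential is exactly this image.

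\textbf{Step 4 (max-rule \eqref{eq:far-recourse-subdifferential} and weights).} By \eqref{eq:qfa-max-scenarios}, $Q_{\mathrm{FAR}}$ is a finite maximum of convex functions that are all finite near $x^\star$. The max-rule for subdifferentials (Danskin / \cite[Ex.~8.31]{rockafellar1998}-type) then gives the convex hull over the active set $\mathcal A$. Representation \eqref{eq:far-sensitivity-combination} is this formula restated.

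To identify $\theta_k$ with the epigraph multipliers, I will write the KKT conditions of \eqref{eq:far-master-extreme} at fixed $x^\star$:
\begin{itemize}
\item stationarity in $\eta$ gives $\sum_k\theta_k=1$ with $\theta_k\ge0$;
\item complementary slackness forces $\theta_k=0$ for inactive $k$, because $\eta=Q_{\mathrm{FAR}}>\phi_k\ge b^\top y^{(k)}$ there;
\item stationarity in $y^{(k)}$ shows that $(\lambda^{(k)}/\theta_k,\mu^{(k)}/\theta_k)$ is dual-feasible and optimal for scenario $k$ when $\theta_k>0$.
\end{itemize}
The $x$-component of the master dual is then $\sum_k E^\top\mu^{(k)}=\sum_k\theta_k g_k$.

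\textbf{Main obstacle.} The delicate step is the bookkeeping in Step 4 when $\theta_k=0$ but $\mu^{(k)}\ne0$. This can happen only if $\mu^{(k)}$ lies in the recession cone of the scenario dual; optimality of the aggregate dual should force such terms to contribute zero, and I will argue this from complementary slackness. A related point is reconciling the joint LP's subdifferential in Step 2 with the nested $x$-then-$\hat d_t$ structure. I plan to avoid this entirely by working with the single joint LP throughout.
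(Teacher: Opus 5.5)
Your Steps 1--3 follow the paper's proof. Strong duality for the finite master writes $V_{\mathrm{FAR}}$ as a supremum, over the dual-feasible set, of functions affine in $\hat d_t$; your right-hand-side map $L$ with the chain rule is the same computation. The scenario sensitivity is the same LP argument, and the finite-max rule gives the convex hull.

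Two small corrections in those steps:
\begin{enumerate}
\item You only know $v$ is finite along the image of $\hat d_t$, not in a full neighbourhood of $\beta$. This is harmless: an LP value function is the support function of the dual-feasible polyhedron, so $\partial v(\beta)$ is the dual-optimal set at every point of finiteness. The polyhedral chain rule needs only $\beta\in\operatorname{dom}v$.
\item Rewriting the recourse constraint as $Fy-Ex\le f$ changes no sign. Adopt the paper's convention $\mu\le0$, with dual objective $\lambda^\top(\hat d^{+}+M\xi^{(k)})+\mu^\top(f+Ex)$; this gives $E^\top\mu$ directly.
\end{enumerate}

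The gap is in Step 4, in identifying the weights.

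\textbf{Inactive scenarios.} Your reason for $\theta_k=0$ uses $\phi_k\ge b^\top y^{(k)}$, which is backwards. Every feasible copy has $b^\top y^{(k)}\ge\phi_k(x^\star)$, and a master optimum is free to set $b^\top y^{(k)}=\eta$ for an inactive $k$. The correct argument has two parts. Complementary slackness holds between every primal optimum and every dual optimum. Replacing $y^{(k)}$ by a scenario-optimal solution gives a primal optimum with $b^\top y^{(k)}=\phi_k(x^\star)<\eta$. Hence $\theta_k=0$ in every dual optimum.

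\textbf{Blocks with $\theta_k=0$ but $\mu^{(k)}\neq0$.} Complementary slackness alone cannot dispose of these. Stationarity in $y^{(k)}$ gives $A^\top\lambda^{(k)}+F^\top\mu^{(k)}=0$. Complementary slackness then gives $\lambda^{(k)\top}(\hat d^{+}+M\xi^{(k)})+\mu^{(k)\top}(f+Ex^\star)=0$. Neither fact says anything about $E^\top\mu^{(k)}$. Indeed, if $x^\star$ lay on the boundary of the feasibility domain of an inactive scenario, such a block could carry a nonzero normal-cone vector into the current-dispatch stationarity.

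What closes the step is the hypothesis that $\phi_k$ is finite near $x^\star$. Take any $x$ near $x^\star$ and a feasible $y$ for scenario $k$. Since $\mu^{(k)}\le0$ and $Fy\le f+Ex$, the easy half of Farkas' lemma gives $\lambda^{(k)\top}(\hat d^{+}+M\xi^{(k)})+\mu^{(k)\top}(f+Ex)\le(A^\top\lambda^{(k)}+F^\top\mu^{(k)})^\top y=0$. So this affine function of $x$ is nonpositive on a neighbourhood of $x^\star$ and zero at $x^\star$. Its gradient therefore vanishes, so $E^\top\mu^{(k)}=0$.

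With these two repairs Step 4 is complete. The paper itself only asserts that scenarios with $\theta_k=0$ make no contribution to $g^\star$, so the second repair fills in a step it leaves implicit.
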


\begin{proof}
Let $\mathcal Z_{\mathrm{FAR}}$ denote the full dual-feasible set of the finite master problem. The terms in its dual objective that contain current nominal demand are $p_t(\zeta)^\top\hat d_t$; collect all remaining terms in $a(\zeta;\hat d^+)$. Strong duality gives
\[
V_{\mathrm{FAR}}(\hat d_t,\hat d^+)
=
\sup_{\zeta\in\mathcal Z_{\mathrm{FAR}}}
\left\{
a(\zeta;\hat d^+)
+p_t(\zeta)^\top\hat d_t
\right\}.
\]

Thus $V_{\mathrm{FAR}}(\cdot,\hat d^+)$ is the pointwise supremum of affine functions of current demand. The convex optimal-value sensitivity formula \cite[Th.~4.26]{ruszczynski2006_nonlinear} identifies its subdifferential with the price vectors induced by dual-optimal solutions. Because $\mathcal Z_{\mathrm{FAR}}^\star$ is a polyhedron and $p_t(\cdot)$ is linear, its image is also a polyhedron and is therefore closed and convex. This proves \eqref{eq:far-current-subdifferential} and the uniqueness statement.

For each scenario, strong duality gives
\begin{equation}
\begin{aligned}
\phi_k(x;\hat d^+)
=
\max_{\lambda,\mu}\quad
&\lambda^\top(\hat d^+ +M\xi^{(k)})
+\mu^\top(f+Ex)\\
\text{s.t.}\quad
&A^\top\lambda+F^\top\mu=b,
\qquad \mu\leq0.
\end{aligned}
\label{eq:scenario-dual}
\end{equation}

Applying the same optimal-value sensitivity formula yields \eqref{eq:scenario-current-sensitivity}. The finite-maximum subdifferential rule applied to \eqref{eq:qfa-max-scenarios} \cite[Th.~2.87]{ruszczynski2006_nonlinear} then gives \eqref{eq:far-recourse-subdifferential}, and the convex-hull representation gives \eqref{eq:far-sensitivity-combination}.

The same weights arise from the finite master problem. If $\theta_k$ is the multiplier on the $k$-th scenario epigraph constraint, stationarity with respect to $\eta$ gives $\sum_k\theta_k=1$, while complementary slackness gives $\theta_k=0$ for every $k\notin\mathcal A(x^\star,\hat d^+)$. For $\theta_k>0$, normalizing the corresponding scenario multipliers by $\theta_k$ produces a dual-optimal solution of \eqref{eq:scenario-dual} and hence a subgradient $g_k\in\partial_x\phi_k(x^\star;\hat d^+)$. Scenarios with $\theta_k=0$ make no contribution to $g^\star$.
\end{proof}

Multiple active worst-case scenarios do not by themselves imply multiple current prices. Their current-dispatch sensitivities must differ, and binding current-to-future constraints must transmit those differences into distinct full KKT price vectors. Thus, if two dual-optimal solutions $\zeta^{(1)}$ and $\zeta^{(2)}$ satisfy $p_t(\zeta^{(1)})\neq p_t(\zeta^{(2)})$, FAR does not select a unique current-period price. Conversely, the price remains unique when $p_t(\zeta)$ is constant over the dual-optimal set, even if other multipliers are nonunique.

Operationally, this distinction matters because the market operator must publish a single, reproducible price for the implemented interval. When price multiplicity arises from competing worst-case trajectories, different dual-optimal weightings of those trajectories can support different settlement prices for the same cleared dispatch and objective value. This is a price-formation issue rather than a failure of dispatch feasibility or cost optimality: the optimization does not determine the settlement price without an additional exogenous selection rule. Such a rule would affect generator payments and the compensation attributed to ramping flexibility while weakening the direct connection between the market-clearing model and the published LMP. This motivates a price-forming robust formulation that retains uncertainty-aware dispatch while producing a single current-period DC LMP.

\section{Marginal Pricing for Causal Affine Recourse}
\label{sec:car_marginal_pricing}

The causal affine restriction replaces the infinite-dimensional recourse-policy space with the finite-dimensional variables $(y^0,Y)$ through $y(\xi):=y^0+Y\xi$. Assume that $\Xi\subseteq\mathbb R^m$ is full dimensional, as in the computational studies. Because two affine functions that agree on a full-dimensional set have identical coefficients, the robust balance equation \eqref{eq:affine-feas-1} is equivalent to
\begin{align}
    Ay^0 &= \hat d^+,
    \label{eq:nominal-balance}\\
    AY &= M.
    \label{eq:response-balance}
\end{align}

Equation~\eqref{eq:nominal-balance} associates nominal demand with the affine-policy intercept, while \eqref{eq:response-balance} enforces consistency of the response coefficients. In the DC specialization, nominal demand also enters the nominal components of the line-flow constraints. The corresponding balance and line-flow multipliers therefore provide the energy and congestion components of the CAR marginal price.

\subsection{DC-Networked Affine Recourse and Adaptive Robust LMP}

Let $G$ be the generator-to-bus incidence matrix, $\Pi$ the PTDF matrix, and let nodal uncertainty at interval $t+\tau$ enter as $\hat d_{t+\tau}+D_{t+\tau}\xi$, where $D_{t+\tau}$ is the time-$\tau$ block row of $M$. Generation and shortage use the affine policies
\[
    x_{t+\tau}(\xi) = x_{t+\tau}^0+R_{t+\tau}\xi, \qquad s_{t+\tau}(\xi) = s_{t+\tau}^0+S_{t+\tau}\xi,
\]
with $D_t=R_t=S_t=0$ because current net load is known. Block-lower-triangular $(R,S)$ impose causality. After coefficient matching and support-function reformulation, the robust counterpart enforces balance, line-flow, capacity, shortage, ramping, causality, and the worst-case cost epigraph.

For each interval, robust balance is equivalent to
\begin{align}
    \mathbf 1^\top
    \left(
        Gx^0_{t+\tau} + s^0_{t+\tau} - \hat d_{t+\tau}
    \right) &=0,
    \label{eq:dc-nom-energy}\\
    \mathbf 1^\top
    \left(
        GR_{t+\tau} + S_{t+\tau} - D_{t+\tau}
    \right) &=0.
    \label{eq:dc-response-energy}
\end{align}
The first identity governs the nominal dispatch, while the second ensures that the affine response balances every uncertainty realization.

Let $\Delta_{t+\tau}:= GR_{t+\tau}+S_{t+\tau}-D_{t+\tau}$ and apply $\sigma_\Xi$ row-wise to matrices. The robust upper and lower line-flow constraints are
\begin{align}
    &\Pi
    \left(
        Gx_{t+\tau}^0+s^0_{t+\tau}-\hat d_{t+\tau}
    \right) + \sigma_\Xi^{\mathcal L}
    \left(
        \Pi\Delta_{t+\tau}
    \right) \leq \overline f_{t+\tau},
    \label{eq:dc-flow-upper}\\
    &-\Pi
    \left(
        Gx_{t+\tau}^0+s^0_{t+\tau}-\hat d_{t+\tau}
    \right) + \sigma_\Xi^{\mathcal L}
    \left(
        -\Pi\Delta_{t+\tau}
    \right) \leq \overline f_{t+\tau}.
    \label{eq:dc-flow-lower}
\end{align}
For each $g$ and $\tau\in\mathcal W^+$, the robust capacity and ramp constraints are
\begin{align}
    &x^0_{g,t+\tau} + \sigma_\Xi(R_{g,t+\tau}^\top) \leq \overline x_g,
    \label{eq:dc-capacity-counterpart-up}\\
    &\underline x_g + \sigma_\Xi(-R_{g,t+\tau}^\top) \leq x^0_{g,t+\tau},
    \label{eq:dc-capacity-counterpart-down}\\
    &x^0_{g,t+\tau}-x^0_{g,t+\tau-1} + \sigma_\Xi
    \left(
        (R_{g,t+\tau}-R_{g,t+\tau-1})^\top
    \right) \leq \overline r_g,
    \label{eq:dc-ramp-counterpart-up}\\
    &x^0_{g,t+\tau-1} - x^0_{g,t+\tau} + \sigma_\Xi
    \left(
        (R_{g,t+\tau-1}-R_{g,t+\tau})^\top
    \right) \leq \underline r_g.
    \label{eq:dc-ramp-counterpart-down}
\end{align}
Here $R_{g,t}=0$, so the $\tau=1$ constraints couple every admissible first-future response to the implemented current dispatch. Robust shortage nonnegativity is

\[
    \sigma_\Xi(-S_{n,t+\tau}^\top) \leq s^0_{n,t+\tau}.
\]

For the linear costs used below, the future-cost epigraph is
\begin{align}
    &\sum_{\tau\in\mathcal W^+}
    \left(
        c^\top x^0_{t+\tau} + C^{\rm shed}\mathbf 1^\top s^0_{t+\tau}
    \right) \notag\\
    &\quad+\sigma_\Xi
    \left(
        \sum_{\tau\in\mathcal W^+}
        \left(
            R_{t+\tau}^\top c + C^{\rm shed}S_{t+\tau}^\top\mathbf 1
        \right)
    \right) \leq \eta.
    \label{eq:dc-cost-counterpart}
\end{align}
Nominal demand enters directly only through \eqref{eq:dc-nom-energy}--\eqref{eq:dc-flow-lower}; the support-function terms enforce feasibility of the shared affine response.

\begin{theorem}[CAR locational marginal prices and scenario-invariant nominal gradients]
\label{thm:dc-lmp}

Consider the finite-dimensional DC robust counterpart of the causal affine robust dispatch problem with a compact polyhedral uncertainty set $\Xi$. Hold $\Xi$, the uncertainty-loading matrices $D_{t+\tau}$, and all remaining market and network data fixed when perturbing the nominal demand vectors $\hat d_{t+\tau}$. Suppose that the robust counterpart and its dual attain a common finite optimal value and that $V_{\mathrm{Aff}}$ is finite in a neighborhood of $\hat d$.

Fix any $\tau\in\mathcal W$. Let $\mathcal Z^\star(\hat d)$ denote the set of full dual-optimal multiplier vectors. For $\zeta\in\mathcal Z^\star(\hat d)$, let $\omega_{t+\tau}(\zeta)$ be the multiplier on the nominal balance constraint written as
\begin{equation}
    \mathbf 1^\top\hat d_{t+\tau} - \mathbf 1^\top Gx_{t+\tau}^0 - \mathbf 1^\top s_{t+\tau}^0 = 0,
    \label{eq:dc-nom-balance-priced}
\end{equation}
and let $\mu_{t+\tau}^+(\zeta),\mu_{t+\tau}^-(\zeta) \in\mathbb R_+^{|\mathcal L|}$ be the multipliers on \eqref{eq:dc-flow-upper} and \eqref{eq:dc-flow-lower}. Define
\begin{equation}
    p_{t+\tau}(\zeta) := \omega_{t+\tau}(\zeta)\mathbf 1 - \Pi^\top
    \left(
        \mu_{t+\tau}^+(\zeta) - \mu_{t+\tau}^-(\zeta)
    \right).
    \label{eq:car-dc-price-map}
\end{equation}
Then
\begin{equation}
    \partial_{\hat d_{t+\tau}} V_{\mathrm{Aff}}(\hat d) = 
    \left\{
        p_{t+\tau}(\zeta): \zeta\in\mathcal Z^\star(\hat d)
    \right\}.
    \label{eq:car-dc-subdiff}
\end{equation}
Moreover, multiple active uncertainty realizations do not create different direct CAR nominal-policy gradients. In particular, the robust future-cost function satisfies 
\begin{equation}
\begin{aligned}
    C_{\mathrm{Aff}}^+(y^0,Y)
    &:=
    \max_{\xi\in\Xi}
    b^\top(y^0+Y\xi)\\
    &=
    b^\top y^0+\sigma_\Xi(Y^\top b),
\end{aligned}
\label{eq:car-cost-separation}
\end{equation}
and therefore
\begin{equation}
    \partial_{y^0}
    C_{\mathrm{Aff}}^+(y^0,Y)
    =
    \{b\},
    \label{eq:car-nominal-gradient}
\end{equation}
regardless of the number of realizations attaining the worst-case cost. The same separation applies row-wise to the robust affine constraints above: their support-function terms depend on the response coefficients but not on the nominal-policy variables. Active realizations can therefore change response-coefficient subgradients, but they have the same direct gradient with respect to the nominal block.

Consequently, CAR determines a unique DC LMP at interval $t+\tau$ if and only if the price set in \eqref{eq:car-dc-subdiff} is a singleton. In that case,
\begin{equation}
    \lambda_{t+\tau}^{\mathrm{CAR}} = \omega_{t+\tau}^\star\mathbf 1 - \Pi^\top
    \left(
        \mu_{t+\tau}^{+,\star} - \mu_{t+\tau}^{-,\star}
    \right),
    \label{eq:car-dc-lmp}
\end{equation}
where any full dual-optimal multiplier vector may be used. Uniqueness of the individual balance and line-flow multipliers is sufficient, but not necessary, for uniqueness of $\lambda_{t+\tau}^{\mathrm{CAR}}$.
\end{theorem}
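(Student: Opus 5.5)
The argument follows the proof of Theorem~\ref{thm:far-current-price}, applied to the finite-dimensional DC robust counterpart. Under polyhedral $\Xi=\{\xi:H_\Xi\xi\le h_\Xi\}$, each support-function term $\sigma_\Xi(a)$ in \eqref{eq:dc-flow-upper}--\eqref{eq:dc-cost-counterpart} is replaced by its dual representation $\min_{\nu\ge0}\{h_\Xi^\top\nu:H_\Xi^\top\nu=a\}$. This turns the counterpart into a single linear program. In that LP the nominal demand vectors $\hat d_{t+\tau}$ appear only in right-hand sides, and only in the nominal balance rows \eqref{eq:dc-nom-balance-priced} and the nominal parts of the line-flow rows \eqref{eq:dc-flow-upper}--\eqref{eq:dc-flow-lower}. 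The response balance \eqref{eq:dc-response-energy}, the auxiliary $\nu$-constraints, and the capacity, ramp and shortage rows do not contain $\hat d$, because $\Xi$ and the matrices $D_{t+\tau}$ are held fixed. Note one caveat for the computational set \eqref{eq:dynamic-uncertainty-set}: its constraint \eqref{eq:dyn-lower} and the scaling $\bar\sigma$ depend on $\hat d$. The hypothesis that $\Xi$ is held fixed is exactly what excludes this dependence, and I will state this explicitly.

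I then write the LP dual with full multiplier vector $\zeta$. Collecting the dual-objective terms containing $\hat d_{t+\tau}$ gives $\omega_{t+\tau}\mathbf 1^\top\hat d_{t+\tau}-(\mu^+_{t+\tau}-\mu^-_{t+\tau})^\top\Pi\hat d_{t+\tau}$, which equals $p_{t+\tau}(\zeta)^\top\hat d_{t+\tau}$ by \eqref{eq:car-dc-price-map}. The signs come from how the rows are written: the balance row is oriented as in \eqref{eq:dc-nom-balance-priced}, and $-\Pi\hat d$ sits on the left of \eqref{eq:dc-flow-upper}. All other terms go into $a(\zeta)$, which is independent of $\hat d_{t+\tau}$. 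By the assumed strong duality, $V_{\mathrm{Aff}}$ as a function of $\hat d_{t+\tau}$ is the supremum over the dual-feasible polyhedron of the affine functions $a(\zeta)+p_{t+\tau}(\zeta)^\top\hat d_{t+\tau}$. Finiteness of $V_{\mathrm{Aff}}$ near $\hat d$ lets me invoke the optimal-value sensitivity formula \cite[Th.~4.26]{ruszczynski2006_nonlinear}, exactly as in the proof of Theorem~\ref{thm:far-current-price}. This gives \eqref{eq:car-dc-subdiff}. Because the dual-optimal set is a polyhedron and $p_{t+\tau}$ is linear, the image is closed and convex.

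For the separation claim \eqref{eq:car-cost-separation}, I write $\max_\xi b^\top(y^0+Y\xi)=b^\top y^0+\max_\xi (Y^\top b)^\top\xi$. The first term does not depend on $\xi$, so the maximum splits as stated. As a function of $y^0$ with $Y$ fixed, this is affine with gradient $b$. That proves \eqref{eq:car-nominal-gradient} independently of how many maximizers $\xi$ exist. The multiplicity of maximizers only enters $\partial\sigma_\Xi(Y^\top b)$, which is the set of maximizers, through the chain rule in $Y$. The same computation applies row by row to each robust inequality $a_0(y^0,x)+a(Y)^\top\xi\le \text{rhs}$: the nominal coefficients appear linearly and do not depend on $\xi$.

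The uniqueness statement then follows from \eqref{eq:car-dc-subdiff}. The convex function is differentiable in $\hat d_{t+\tau}$ exactly when the subdifferential is a singleton, and in that case every element of $\mathcal Z^\star$ gives the same $p_{t+\tau}$, which yields \eqref{eq:car-dc-lmp}. Sufficiency of multiplier uniqueness is immediate. Non-necessity only needs an observation: distinct $(\omega,\mu^\pm)$ can give the same $p_{t+\tau}$, for example when line multipliers shift along $\ker\Pi^\top$. A small example suffices. The main obstacle is the bookkeeping step of checking that $\hat d_{t+\tau}$ truly appears nowhere else in the reformulated LP, especially not inside the support-function constraints. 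This is also where the fixed-$\Xi$ hypothesis does its work.
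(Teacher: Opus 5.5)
Your proposal is correct and follows essentially the same route as the paper. You write the dual objective of the (support-function-dualized) robust counterpart as $a(\zeta)+\sum_{\kappa}p_{t+\kappa}(\zeta)^\top\hat d_{t+\kappa}$ and invoke the optimal-value sensitivity formula, using the polyhedral-image argument to drop the convex hull; you split $\max_{\xi\in\Xi}b^\top(y^0+Y\xi)=b^\top y^0+\sigma_\Xi(Y^\top b)$ (likewise row by row) and read uniqueness off the singleton condition. Your explicit dualization of $\sigma_\Xi$ and the fixed-$\Xi$ caveat are harmless refinements. For the non-necessity example, check that the $\ker\Pi^\top$ shift keeps the line multipliers nonnegative, complementary, and consistent with the response-coefficient stationarity rows; this is most easily realized at $\tau=0$, e.g.\ with two identical parallel lines that both bind.
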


\begin{proof}
Fix $\tau\in\mathcal W$ and let $\zeta$ denote a full dual-feasible multiplier vector. Under the sign conventions in \eqref{eq:dc-nom-balance-priced} and \eqref{eq:dc-flow-upper}--\eqref{eq:dc-flow-lower}, the terms in the Lagrangian containing $\hat d_{t+\tau}$ have coefficient
\[
    \omega_{t+\tau}\mathbf 1 - \Pi^\top
    \left(
        \mu_{t+\tau}^+
        -
        \mu_{t+\tau}^-
    \right)
    =p_{t+\tau}(\zeta).
\]
The dual objective can therefore be written as

\begin{equation}
    \Theta(\zeta;\hat d) =  a(\zeta) +
    \sum_{\kappa\in\mathcal W} p_{t+\kappa}(\zeta)^\top \hat d_{t+\kappa},
    \label{eq:dual-affine-demand}
\end{equation}
where $a(\zeta)$ is independent of $\hat d$. Strong duality gives
\[
    V_{\mathrm{Aff}}(\hat d) = \sup_{\zeta\in\mathcal Z} \Theta(\zeta;\hat d),
\]
where $\mathcal Z$ is the full dual-feasible set. The convex optimal-value sensitivity formula \cite[Th.~4.26]{ruszczynski2006_nonlinear} therefore yields
\[
    \partial_{\hat d_{t+\tau}}
    V_{\mathrm{Aff}}(\hat d)
    =
    \operatorname{conv}
    \left\{
        p_{t+\tau}(\zeta):
        \zeta\in\mathcal Z^\star(\hat d)
    \right\}.
\]
Because $\mathcal Z^\star(\hat d)$ is a polyhedron and $p_{t+\tau}$ is linear, its image is also a polyhedron and hence closed and convex. The convex hull may therefore be omitted, proving \eqref{eq:car-dc-subdiff}.

For the worst-case cost,
\[
\begin{aligned}
    \max_{\xi\in\Xi}
    b^\top(y^0+Y\xi)
    &=
    b^\top y^0
    +
    \max_{\xi\in\Xi}
    (Y^\top b)^\top\xi\\
    &=
    b^\top y^0+\sigma_\Xi(Y^\top b).
\end{aligned}
\]

The support-function term is independent of $y^0$, proving \eqref{eq:car-nominal-gradient}. If several realizations $\xi^{(k)}$ attain the maximum, their convex weights can produce different response-coefficient subgradients of the form
\[
    \sum_k\theta_k b(\xi^{(k)})^\top,
    \qquad
    \theta_k\geq0,
    \qquad
    \sum_k\theta_k=1,
\]
but the corresponding intercept gradient remains $\sum_k\theta_k b=b$.

More generally, each robust affine inequality row can be written as
\[
    \psi_j(z^0,Z;\hat d)
    =
    a_j(z^0,\hat d)
    +
    \sigma_\Xi\!\left(q_j(Z)\right),
\]
where $z^0$ collects the uncertainty-independent current and nominal-policy variables, $Z$ collects the response
coefficients, and $a_j$ is affine. Because the support-function term is independent of $(z^0,\hat d)$,
\[
    \partial_{(z^0,\hat d)}
    \psi_j(z^0,Z;\hat d)
    =
    \left\{
        \nabla_{(z^0,\hat d)}
        a_j(z^0,\hat d)
    \right\}.
\]
Thus, active-realization multiplicity can affect response-coefficient stationarity, but it does not introduce different direct nominal gradients. 

Finally, \eqref{eq:car-dc-subdiff} is a singleton exactly when $p_{t+\tau}$ is constant over $\mathcal Z^\star(\hat d)$. In that case, every dual optimum gives the common price in \eqref{eq:car-dc-lmp}. Distinct dual optima may nevertheless produce the same price because only their image under $p_{t+\tau}$ is economically relevant.
\end{proof}

Theorem~\ref{thm:dc-lmp} identifies the structural price-forming distinction between FAR and CAR. Under FAR,
independently optimized active recourse problems can contribute different current-dispatch sensitivities through
their epigraph weights. Under CAR, all uncertainty realizations are evaluated through a shared policy. Their
response-coefficient subgradients may differ, but their direct nominal-policy gradients are the same. CAR therefore removes the FAR mechanism through which competing worst-case realizations can directly imply different
marginal values of demand. It does not eliminate ordinary price-relevant dual nonuniqueness in the common robust counterpart, just as deterministic LAED does not guarantee unique multipliers in every degenerate instance. This scenario-invariance property comes from the shared affine parameterization and the separation of its nominal and response coefficients, rather than from causality of the recourse.

The corresponding generator stationarity condition makes the intertemporal price signal explicit. For a non-endpoint interval $t+\tau$, let $\alpha_{t+\tau}^{\pm}$ be the capacity multipliers and let $\gamma_{t+\tau}^{\pm}$ be the ramp multipliers linking intervals $t+\tau-1$ and $t+\tau$. Stationarity of the nominal generation dispatch gives
\begin{align}
    &\nabla c_{t+\tau}(x_{t+\tau}^{0,\star})
    -G^\top\lambda_{t+\tau}^{\mathrm{CAR},\star}
    +\alpha_{t+\tau}^{+,\star}-\alpha_{t+\tau}^{-,\star}
    \notag\\
    &\quad+\gamma_{t+\tau}^{+,\star}-\gamma_{t+\tau}^{-,\star}
    -\gamma_{t+\tau+1}^{+,\star}+\gamma_{t+\tau+1}^{-,\star}=0.
    \label{eq:dc-generator-stationarity}
\end{align}
Thus the bus price equals marginal production cost adjusted by capacity scarcity and adjacent-interval ramp opportunity costs. This identity motivates the current-period ramp adjustment below.

Setting $\tau=0$ in \eqref{eq:car-dc-subdiff} gives the CAR counterpart to \eqref{eq:far-current-subdifferential}; when this image is a singleton, \eqref{eq:car-dc-lmp} gives the current nodal settlement $\lambda_t^{\mathrm{CAR}}$. Future uncertainty changes its value through robust-policy and intertemporal multipliers but introduces no trajectory index. The next theorem adds the forward-ramp opportunity cost needed for unit-level dispatch-following.

\begin{theorem}[Ramp-adjusted current settlement supports generator profit maximization]
\label{thm:current-incentive}
Consider one rolling-horizon CAR market clearing at current interval $t$, and
fix a primal-dual optimal solution. Let $x_t^{0,\star}$ be the implemented
current dispatch. For generator $g$, let $n(g)$ denote its bus and define its
current local feasible set by
\begin{equation}
\begin{aligned}
\mathcal X_{g,t}(x_{g,t-1}^{\mathrm{impl}})
:=
\{z_g:\;&
\underline x_g \leq z_g \leq \overline x_g,\\
&
-\underline r_g
\leq z_g-x_{g,t-1}^{\mathrm{impl}}
\leq \overline r_g
\}.
\end{aligned}
\label{eq:current-local-feasible-set}
\end{equation}
Assume the current-period generation cost $c_{g,t}(\cdot)$ is convex and differentiable and that, apart from the first-future ramp constraints priced below, the only unit-specific constraints involving $x_{g,t}^{0}$ are the capacity and backward-ramp constraints defining $\mathcal X_{g,t}(x_{g,t-1}^{\mathrm{impl}})$.

Let $\lambda_{n(g),t}^{\mathrm{CAR},\star}$ be the bus-level CAR  LMP at generator $g$'s bus. Let $\gamma_{g,t+1}^{+,\star},\gamma_{g,t+1}^{-,\star}\ge 0$ be the optimal CAR multipliers on the first future ramp constraints linking the current output $x_{g,t}^{0}$ to the first future affine dispatch $x_{g,t+1}(\xi)$, with stationarity contribution $-\gamma_{g,t+1}^{+,\star}+\gamma_{g,t+1}^{-,\star}$ with respect to $x_{g,t}^{0}$. Define the unit-specific ramp-adjusted current price by
\begin{equation}
    \pi_{g,t}^{\mathrm{CAR},\star}
    :=
    \lambda_{n(g),t}^{\mathrm{CAR},\star}
    +
    \gamma_{g,t+1}^{+,\star}
    -
    \gamma_{g,t+1}^{-,\star}.
    \label{eq:current-ramp-adjusted-price}
\end{equation}
Then the ISO instruction $x_{g,t}^{0,\star}$ solves generator $g$'s current profit-maximization problem:
\begin{equation}
    x_{g,t}^{0,\star}
    \in
    \operatorname*{arg\,max}_{z_g\in\mathcal X_{g,t}(x_{g,t-1}^{\mathrm{impl}})}
    \left\{
        \pi_{g,t}^{\mathrm{CAR},\star} z_g
        -
        c_{g,t}(z_g)
    \right\}.
    \label{eq:current-local-profit}
\end{equation}
Equivalently, the same incentive support can be implemented by settling energy at the bus-level CAR LMP $\lambda_{n(g),t}^{\mathrm{CAR},\star}$ and paying the unit-specific forward-ramp adder
\[
    \left(
        \gamma_{g,t+1}^{+,\star}
        -
        \gamma_{g,t+1}^{-,\star}
    \right)z_g .
\]
\end{theorem}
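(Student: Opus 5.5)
The argument rests on the KKT conditions of the CAR robust counterpart at the fixed primal-dual optimum, restricted to the coordinate $x_{g,t}^0$. We then compare them with the KKT conditions of the convex local problem \eqref{eq:current-local-profit}.

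\textbf{Step 1: stationarity for $x_{g,t}^0$.} Because the robust counterpart is a finite LP (or a convex program with differentiable current cost), the optimum satisfies stationarity. We collect every constraint involving $x_{g,t}^0$:
\begin{itemize}
\item the current nominal balance \eqref{eq:dc-nom-balance-priced}, with coefficient $-\omega_t$ under the stated sign convention;
\item the current line-flow constraints, with coefficient $G^\top\Pi^\top(\mu_t^+-\mu_t^-)$ restricted to row $g$;
\item the current capacity constraints, with multipliers $\alpha_{g,t}^{\pm}$;
\item the backward-ramp constraints to $x_{g,t-1}^{\mathrm{impl}}$, with multipliers $\beta_{g,t}^{\pm}$;
\item the first-future ramp counterparts \eqref{eq:dc-ramp-counterpart-up}--\eqref{eq:dc-ramp-counterpart-down} at $\tau=1$, with multipliers $\gamma_{g,t+1}^{\pm}$.
\end{itemize}
For the last group we invoke the separation argument of Theorem~\ref{thm:dc-lmp}. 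The support-function terms depend only on $R_{g,t+1}$, since $R_{g,t}=0$. Their direct gradient in $x^0_{g,t}$ is therefore exactly $\mp 1$, contributing $-\gamma_{g,t+1}^{+}+\gamma_{g,t+1}^{-}$, as the statement stipulates. Using $\lambda_t^{\mathrm{CAR}}=\omega_t\mathbf 1-\Pi^\top(\mu_t^+-\mu_t^-)$ from \eqref{eq:car-dc-lmp}, and noting that $G^\top$ selects bus $n(g)$, the stationarity condition becomes
\[
\nabla c_{g,t}(x_{g,t}^{0,\star})-\lambda_{n(g),t}^{\mathrm{CAR},\star}+\alpha_{g,t}^{+,\star}-\alpha_{g,t}^{-,\star}+\beta_{g,t}^{+,\star}-\beta_{g,t}^{-,\star}-\gamma_{g,t+1}^{+,\star}+\gamma_{g,t+1}^{-,\star}=0.
\]
This is the $\tau=0$ analogue of \eqref{eq:dc-generator-stationarity}, with the cost term included. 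The assumption that no other unit-specific constraints involve $x_{g,t}^0$ guarantees that no further terms appear.

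\textbf{Step 2: KKT conditions of the local problem.} We rearrange the identity above as
\[
\pi_{g,t}^{\mathrm{CAR},\star}-\nabla c_{g,t}(x_{g,t}^{0,\star})=\alpha_{g,t}^{+,\star}-\alpha_{g,t}^{-,\star}+\beta_{g,t}^{+,\star}-\beta_{g,t}^{-,\star}.
\]
The multipliers $\alpha^\pm,\beta^\pm\ge0$ are dual feasible, and they satisfy complementary slackness with respect to the constraints defining $\mathcal X_{g,t}(x_{g,t-1}^{\mathrm{impl}})$. This is because those are the same constraints, evaluated at the same point $x_{g,t}^{0,\star}$, which is primal feasible for them. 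The pair $(x_{g,t}^{0,\star},\alpha^\star,\beta^\star)$ therefore satisfies the KKT system of the maximization of $\pi z_g-c_{g,t}(z_g)$ over $\mathcal X_{g,t}$. That objective is concave and the feasible set is a polyhedral interval, so KKT sufficiency gives \eqref{eq:current-local-profit} without any constraint qualification.

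\textbf{Step 3: equivalent settlement.} This follows by linearity. We have $\pi z_g=\lambda_{n(g),t}z_g+(\gamma^+_{g,t+1}-\gamma^-_{g,t+1})z_g$, so the profit objective is identical and so is the argmax.

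The main obstacle is Step 1, specifically the sign bookkeeping. We must verify two things. First, the PTDF terms in the current balance and flow rows aggregate, through $G^\top$, exactly to the bus-$n(g)$ entry of $\lambda_t^{\mathrm{CAR}}$. Second, the robust first-future ramp rows contribute to $x^0_{g,t}$ only through their nominal parts. For the first point we write the current constraints in the same sign convention as \eqref{eq:dc-flow-upper}--\eqref{eq:dc-flow-lower}, so that the coefficient of $Gx_t$ is the negative of the coefficient of $\hat d_t$. This makes the generation coefficient $-G^\top p_t(\zeta)$ by construction. The second point uses the nominal/response separation established in Theorem~\ref{thm:dc-lmp}.
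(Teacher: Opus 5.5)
Your proposal is correct and follows essentially the same route as the paper. Both write the CAR stationarity condition for $x_{g,t}^0$, absorb the first-future ramp multipliers into $\pi_{g,t}^{\mathrm{CAR},\star}$, recognize the remaining conditions as the KKT system of the convex local problem over $\mathcal X_{g,t}(x_{g,t-1}^{\mathrm{impl}})$, and invoke KKT sufficiency, with the side-payment form following by linearity. The one difference is that you derive the stationarity identity explicitly, via the $-G^\top p_t(\zeta)$ aggregation of the balance and flow terms and the nominal/response separation of the $\tau=1$ robust ramp rows, whereas the paper simply states it.
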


\begin{proof}
Let
$\alpha_{g,t}^{+,\star},\alpha_{g,t}^{-,\star}\ge 0$ denote the optimal
multipliers on the current upper and lower output bounds for generator $g$,
and let
$\gamma_{g,t}^{+,\star},\gamma_{g,t}^{-,\star}\ge 0$ denote the optimal
multipliers on the backward-ramp constraints defining
$\mathcal X_{g,t}(x_{g,t-1}^{\mathrm{impl}})$. The CAR stationarity condition
for $x_{g,t}^0$ is
\begin{equation}
\begin{aligned}
    0
    =
    &c'_{g,t}(x_{g,t}^{0,\star})
    -
    \lambda_{n(g),t}^{\mathrm{CAR},\star}
    +
    \alpha_{g,t}^{+,\star}
    -
    \alpha_{g,t}^{-,\star}
    +
    \gamma_{g,t}^{+,\star}
    -
    \gamma_{g,t}^{-,\star}
    \\
    &-
    \gamma_{g,t+1}^{+,\star}
    +
    \gamma_{g,t+1}^{-,\star}.
\end{aligned}
\label{eq:current-system-stationarity}
\end{equation}
Substituting \eqref{eq:current-ramp-adjusted-price} gives
\begin{equation}
    c'_{g,t}(x_{g,t}^{0,\star})
    -
    \pi_{g,t}^{\mathrm{CAR},\star}
    +
    \alpha_{g,t}^{+,\star}
    -
    \alpha_{g,t}^{-,\star}
    +
    \gamma_{g,t}^{+,\star}
    -
    \gamma_{g,t}^{-,\star}
    =
    0.
\label{eq:current-local-stationarity}
\end{equation}
Together with primal feasibility in $\mathcal X_{g,t}(x_{g,t-1}^{\mathrm{impl}})$, dual feasibility, and complementary slackness for the current capacity and backward-ramp constraints, \eqref{eq:current-local-stationarity} is exactly the KKT system for the
equivalent convex minimization problem:
\[
    \min_{z_g\in\mathcal X_{g,t}(x_{g,t-1}^{\mathrm{impl}})}
    \left\{
        c_{g,t}(z_g)
        -
        \pi_{g,t}^{\mathrm{CAR},\star}z_g
    \right\}.
\]
Equivalently, $x_{g,t}^{0,\star}$ solves the profit-maximization problem
\eqref{eq:current-local-profit}. Since $c_{g,t}(\cdot)$ is convex and
$\mathcal X_{g,t}(x_{g,t-1}^{\mathrm{impl}})$ is polyhedral, these KKT
conditions are sufficient. The side-payment form is algebraically equivalent because
the forward-ramp term can be written either as part of the resource-specific
linear price
$\pi_{g,t}^{\mathrm{CAR},\star}z_g$
or as a linear side payment added to the bus-level LMP settlement.
\end{proof}

The resulting CAR-LAED settlement combines the bus-level LMP from Theorem~\ref{thm:dc-lmp} with the discriminate forward-ramp adder in Theorem~\ref{thm:current-incentive} \cite{guo2021,Chen2021}. The adder restores the stationarity term associated with positioning current output for first-future-interval ramp feasibility; under the stated assumptions, it eliminates current-period lost opportunity cost without changing dispatch.

\subsection{ISO Implementation}

One CAR solve replaces deterministic LAED on the same DC network model. The ISO settles the current interval using \eqref{eq:car-dc-lmp} and, when needed, the forward-ramp side payment from Theorem~\ref{thm:current-incentive}; no scenario-indexed repricing is required. It then passes the implemented dispatch to the next rolling window and re-solves with updated conditions. The following studies quantify when active ramping and congestion make CAR differ from FAR or NAR.

\section{Computational Simulations}
\label{sec:computational_simulations}

Three studies test the CAR pricing mechanism and its operational cost. A single-bus example isolates the FAR current-price issue; a ten-generator rolling-horizon study compares deterministic LAED, NAR, and CAR across uncertainty budgets and ramp regimes, with FAR comparisons for selected $\Gamma=3$ windows; and an IEEE 300-bus case~\cite{pglibopf} tests adaptivity gaps, computation, and the ramp-adjusted settlement. All models are linear programs implemented in Pyomo and solved with Gurobi.

\subsection{Toy Example}

This example highlights the distinction between the price-forming properties of FAR and CAR, providing intuition for the results of Theorems~\ref{thm:far-current-price} and~\ref{thm:dc-lmp}. Consider a one-bus dispatch with three generators and no transmission constraints. The cheap, slow, and peaker generators have marginal costs (1, 4, 8) \$/MW, capacities (13, 14, 20) MW, respectively. Only the slow generator is ramp-limited at 4 MW per interval. There are no binding constraints between the previous and current interval, and load in the current interval is 18 MW. This example expresses marginal costs and LMPs in \$/MW, not \$/MWh because it is not restricted to 1-hour time discretizations. We define how uncertainty enters the net load as the following simplification of the dynamic uncertainty set with uncertainty only in $t+2, t+3$:
\begin{align}
d^+(\xi) = (10, 14+14\xi, 35-7\xi), \quad \xi \in [0,1]
\end{align}

Both formulations choose the obvious current-interval dispatch (13, 5, 0) with cost \$33. The FAR and CAR differ in their future dispatch as provided below:

\emph{FAR}: As a scenario-dependent, perfect-information formulation, FAR allows separate recourse for each worst-case uncertainty realization. Table \ref{tab:toy-dispatch} provides the obvious recourse dispatches for each generator type, where $y_c^+, y_s^+, y_p^+$ denote the recourse dispatch of the cheap, slow, and peaker generators, respectively. The bounds of the uncertainty set yield equally bad worst-case load perturbations as demonstrated by the same optimal cost. The important distinction here is the derivation of the LMP for each scenario. For $\xi = 0$, the LMP is \$6/MW because one more current slow MW propagates through all three future intervals. Under $\xi=1$, it replaces cheap generation at $t+1$ and peaker generation at $t+2$, which is represented additively across time periods as follows:
\[
\lambda_{\xi=0}^{FAR}=4+3+3-4 = \$6, \quad \lambda_{\xi=1}^{FAR}=4+3-4+0=\$3
\]

\begin{table}[t]
\centering
\caption{Toy Example Dispatch $(t, t+1, t+2, t+3)$ Total Optimal Cost (USD) and LMP (USD/MWh).}
\label{tab:toy-dispatch}
\scriptsize
\setlength{\tabcolsep}{2.2pt}
\begin{tabular}{@{}ccccccc@{}}
\hline
Model & $\xi$  & $y_c$ & $y_s$ & $y_p$ & Total Cost & $\lambda_t$\\
\hline
FAR & $0$ &  $(13, 9,9,13)$  & $(5, 1,5,9)$   & $(0, 0,0,13)$ & $\$228$ & $\$6$\\
FAR & $1$ &  $(13, 1,13,13)$ & $(5, 9,13,14)$ & $(0, 0,2,1)$  & $\$228$ & $\$3$\\
CAR & $0$ &  $(13, 5, 5, 13)$  & $(5, 5,9,13)$  & $(0, 0,0,9)$  & $\$236$ & $\$4$\\
CAR & $1$  & $(13, 5,13,13)$ & $(5, 5,9,13)$  & $(0, 0,6,2)$  & $\$236$ & $\$4$\\
\hline
\end{tabular}
\end{table}

\emph{CAR}: CAR instead chooses a common causal output policy: $y_{s,t+1} \in [1,9]$, which is simplified to $y$ for conciseness. We choose to make recourse decisions functions of $y$ because the slow plant expresses the pricing difference through binding intertemporal constraints. Since $d_{t+1}=10$, the $t+1$ dispatch is given by $(10-y, y, 0)$ for cheap, slow, and peaker generators respectively. Let $C_\xi(y)$ be the minimum cost of the three future intervals conditional on the uncertainty realization $\xi$ and common $y$, and the policy changes at $y=6$, where the relevant capacity constraint binds. For $1 \leq y \leq 6$, the slow unit does not reach capacity by $t+3$ and has recourse dispatch $(y, y+4, y+8)$ for both $\xi$. Using the dispatch regions in Table~\ref{tab:toy-value-function}, the two worst-case cost functions $(C_0(y), C_1(y))$ are the summations of cost across generators and time horizon:
\begin{equation}
\bigl(C_0(y),C_1(y)\bigr)
=
\begin{cases}
\bigl(193+2y,\;228-5y\bigr),
    & 1\leq y\leq6,\\
\bigl(187+3y,\;204-y\bigr),
    & 6\leq y\leq9.
\end{cases}
\label{eq:toy-endpoint-costs}
\end{equation}

\begin{table}[t]
\centering
\caption{Toy Example Recourse Dispatch $(t+1, t+2, t+3)$ as a function of $y:=y_{s,t+1}$.}
\label{tab:toy-value-function}
\scriptsize
\setlength{\tabcolsep}{2.2pt}
\begin{tabular}{@{}ccccc@{}}
\hline
 $\xi$  & $y$-bounds& $y_c^+$ & $y_s^+$ & $y_p^+$ \\
\hline
$0$ & $1\leq y\leq 6$ & $(10-y,10-y,13)$ & $(y,y+4,y+8)$ & $(0, 0, 14-y)$ \\
$0$ & $6\leq y\leq 9$ & $(10-y,4,13)$ & $(y,10,14)$ & $(0, 0, 8)$ \\
$1$ & $1\leq y\leq 6$ & $(10-y,13,13)$ & $(y,y+4,y+8)$ & $(0,11-y,7-y)$ \\
$1$ & $6\leq y\leq 9$ & $(10-y,13,13)$ & $(y,y+4,14)$ & $(0, 11-y, 1)$ \\
\hline
\end{tabular}
\end{table}

Because every causal policy must select one common $y$, its worst-case future cost is bounded below by
\begin{align}
    \min_{1\leq y\leq 9} \max\{C_0(y), C_1(y)\}.
\end{align}
Here, $C_0$ increases while $C_1$ decreases, and their relevant pieces intersect uniquely at $y^*=5$, where $C_0(5)=C_1(5)=203$. Evaluating Table~\ref{tab:toy-value-function} at $y^*$ and interpolating its dispatch at the endpoints of the uncertainty set give the following: 
\begin{align}
&y_c^+(\xi)=(5, 5+8\xi, 13),\quad y_s^+(\xi)=(5, 9, 13), \\
&y_p^+(\xi)=(0, 6\xi, 9-7\xi)
\label{eq:toy-recourse-value-function}
\end{align}
The $t+1$ action is independent of $\xi$ and the policy is feasible for every $\xi \in [0,1]$. 

Finally, current dispatch is not imposed. Let $x=y_{s,t} \in [1,9]$. Because $y^*=5$ is ramp-feasible for every such $x$, direct elimination of the remaining variables gives 
\begin{equation}
\begin{aligned}
J_{\mathrm{FAR}}(x)&=
\begin{cases}261-9x,&1\leq x\leq2,\\
253-5x,&2\leq x\leq5,\\203+5x,&5\leq x\leq9,
\end{cases}\\[-1mm]
J_{\mathrm{CAR}}(x)&=
\begin{cases}256-4x,&1\leq x\leq5,\\
221+3x,&5\leq x\leq9.
\end{cases}
\end{aligned}
\label{eq:toy-joint-objectives}
\end{equation}
Both objectives are uniquely minimized at $x=5$, giving total costs 228 and 236. At the CAR optimum, Table~\ref{tab:toy-dispatch} shows that the ramping constraint for $y_{s,t}$ is slack and the capacity constraint for $y_{c,t}$ is binding, so the CAR LMP is the slow generator's marginal cost of $\$4$/MW. Although this price lies within FAR's dual-optimal price interval, CAR does not obtain it by applying an exogenous selection rule to the FAR dual solutions. It is instead the balance multiplier produced by the distinct shared-policy CAR market-clearing problem.

Both formulations implement the same current dispatch but produce different price structures. FAR attains the lower total cost of 228 through uncertainty-dependent recourse. Its active worst-case realizations produce endpoint prices of $\$3$/MW and $\$6$/MW, and convex dual-optimal weightings generate the full price interval $[3,6]$ USD/MW, leaving the current LMP set-valued as characterized in Theorem~\ref{thm:far-current-price}. CAR enforces one causal affine policy, increasing the total cost to 236 while producing a unique LMP of $\$4$/MW, even though multiple uncertainty realizations remain worst-case.

\subsection{Ten-Generator Rolling Market Study}
\label{subsec:ten-generator-study}

The ten-generator study tests whether CAR changes dispatch and prices when uncertainty interacts with binding ramps. It collapses exactly to deterministic LMP at $\Gamma=0$; Fig.~\ref{fig:ten-gen-gamma} and Table~\ref{tab:ten-gen-summary} show increasing effects as ramping tightens.

This study uses a demand trajectory from MISO's projected net load for August 2032, consisting of 288 5-minute intervals in a 24-hour period, which is rescaled to $1050$-MW average (range $636.97$--$1926.28$~MW). Each rolling solve contains a 1-hour look-ahead window. The tight, medium, and loose cases multiply every unit's listed ramp limit by $0.2$, $0.5$, and $1.0$, respectively. The uncertainty calibration uses $\sigma_{\mathrm{rel}}=0.05$, $\rho=0.8$, and $\Gamma\in\{0,1,2,3\}$. Shortage imbalance is penalized at $3500$ USD/MWh. Figures~\ref{fig:ten-gen-gamma} and~\ref{fig:ten-gen-methods} use the 276-window common-state protocol: the first state is a merit-order dispatch at the first load observation, and each subsequent window uses the preceding nominal-LAED dispatch as the common initial state. All models are implemented in Pyomo and solved with Gurobi.

\begin{figure}[t]
    \centering
    \includegraphics[width=\columnwidth]{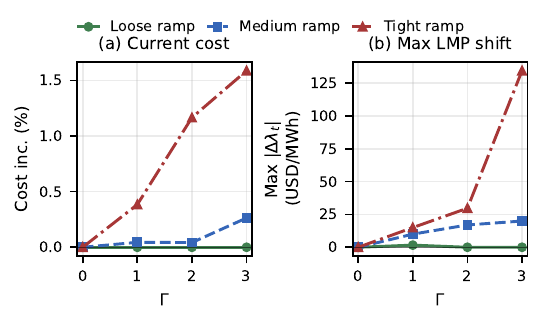}
    \caption{Ten-generator CAR cost and LMP effects across uncertainty budgets and ramp regimes.}
    \label{fig:ten-gen-gamma}
\end{figure}

At $\Gamma=3$, panels (a) and (b) of Fig.~\ref{fig:ten-gen-methods} use all 276 windows. Under loose, medium, and tight ramps, CAR changes current cost by $0.000\%$, $0.264\%$, and $1.587\%$, and its mean signed LMP shifts are $0.000$, $-0.937$, and $-0.804$ USD/MWh, respectively. Panel (c) uses every 24th window, giving 12 comparisons per ramp regime. FAR is solved using the procedure in \cite{lorca2015}, with all $1760$ possible extreme realizations checked. The mean CAR--FAR gaps are $0.97\%$, $1.1\%$, and $49.54\%$. The larger tight-ramp CAR gap and the larger NAR gaps occur because limited ramping and severe uncertainty make future imbalance costly; NAR further restricts future generation by fixing it before uncertainty is known.

\begin{figure*}[t]
    \centering
    \includegraphics[width=\textwidth]{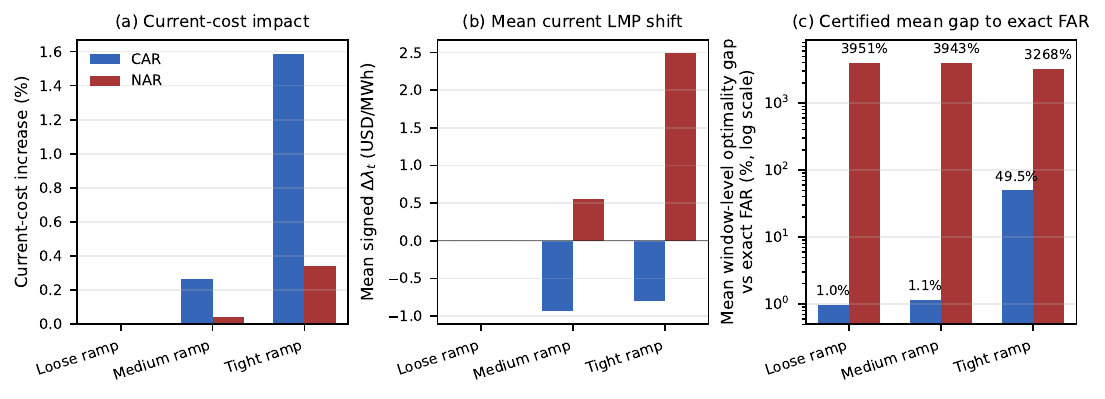}
    \caption{Ten-generator comparison at $\Gamma=3$ across ramping regimes. Measures current-cost impact, LMP shift, and optimality gap.}
    \label{fig:ten-gen-methods}
\end{figure*}

\begin{table}[t]
\centering
\caption{Ten-generator rolling summary relative to deterministic LAED; mean price shifts are signed and maximum shifts are absolute.}
\label{tab:ten-gen-summary}
\footnotesize
\begin{tabular}{lccc}
\hline
Ramp regime & CAR cost effect & Mean $\Delta\lambda_t$ & Max $|\Delta\lambda_t|$ \\
\hline
Loose & $0.000\%$ & $-0.0087$ to $0.000$ & $0.00$--$1.65$ \\
Medium & $0.043$--$0.264\%$ & $-0.937$ to $0.048$ & $10.0$--$20.0$ \\
Tight & $0.384$--$1.587\%$ & $-0.804$ to $-0.599$ & $15.0$--$134.5$ \\
\hline
\end{tabular}
\end{table}

We additionally evaluate the policy in a closed-loop system rather than under the common-state comparison above. This produced no load shedding or ramp slack for $\Gamma=1,2,3$. Its average current-cost increase is $0.12\%$, largest mean price shift is $0.43$ USD/MWh. At tight ramps and $\Gamma=3$, it raises current cost by $0.62\%$, changes the current dispatch in all 48 rolling windows, and moves current output by 336 MW on average in $\ell_1$-norm (194 MW maximum for one unit).

\subsection{Public Network Study}
\label{subsec:public-network-study}

The IEEE 300-bus study converts PGLib-OPF case300 \cite{pglibopf} to a congested DC setting. Its three-future-period runs fix $\Gamma=2$, ramp multiplier of $0.20$, and a $1.15$ line-stress margin, then vary $\sigma_{\mathrm{rel}}$, the per-period parameterization of the uncertainty set with uncertainty only on eight of the largest-load buses with no load shedding throughout. Linear costs are derivatives of the native polynomial costs at the base dispatch; uncertainty is allocated by load over the eight largest-load buses. Figure \ref{fig:pglib300-summary} provides a depiction of the LMP shift and optimality gap across different $\sigma_{\mathrm{rel}}$. 

\begin{figure}[t]
    \centering
    \includegraphics[width=\columnwidth]{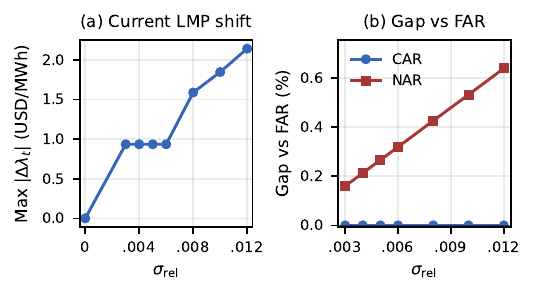}
    \caption{IEEE 300-bus results at $\Gamma=2$: (a) maximum current LMP shift and (b) CAR and NAR objective differences from FAR.}
    \label{fig:pglib300-summary}
\end{figure}

Fig.~\ref{fig:pglib300-summary} shows a maximum LMP shift of $2.139$ USD/MWh and a robust-objective premium over deterministic LAED of $2.004\%$. In panel (b), CAR matches FAR's protected objective at reported precision, whereas the NAR gap reaches $0.64\%$ at $\sigma_{\mathrm{rel}}=0.012$.

Table~\ref{tab:pglib300-summary} reports the objective gap, price shifts, dispatch, and computation time under moderate uncertainty on all buses with $\sigma_{\mathrm{rel}}=0.004$ and $\Gamma=1.5$. The line limit was modified with multiplier $1.10$ and ramp multiplier $0.40$. CAR nearly matches FAR's objective and dispatch, whereas NAR is more conservative and shifts price more on average, however, CAR does have the larger maximum LMP shift.

\begin{table}[t]
\centering
\caption{IEEE 300-bus results with $\sigma_{\mathrm{rel}}=0.004$ and $\Gamma=1.5$. Run times report median [min, max] over 10 runs in seconds.}
\label{tab:pglib300-summary}
\scriptsize
\setlength{\tabcolsep}{1.4pt}
\begin{tabular}{@{}lccccc@{}}
\hline
Method
& Max
& Mean
& Gap
& Dispatch
& Solve time \\
& $|\Delta\lambda|$
& $|\Delta\lambda|$
& vs. FAR
& move
& median [min, max] \\
& \$/MWh
& \$/MWh
& \%
& MW
& sec \\
\hline
Det.
& --
& --
& --
& --
& $0.018\,[0.017,0.018]$ \\
NAR
& $2.380$
& $0.465$
& $0.243$
& $149.8$
& $1.556\,[1.456,1.852]$ \\
CAR
& $2.393$
& $0.398$
& $0.001$
& $140.6$
& $3.601\,[3.283,4.849]$ \\
FAR
& --
& --
& --
& $140.2$
& $68.038\,[66.915,70.114]$ \\
\hline
\end{tabular}
\setlength{\tabcolsep}{6pt}
\end{table}

NAR and CAR are solved directly using their traditional finite support-dual robust-counterpart LPs. FAR is solved by complete vertex enumeration as one globally exact extensive LP containing all 24 extreme uncertainty trajectories. The run times are reported as the Gurobi solve time, excluding the model building time. Together, Figure~\ref{fig:pglib300-summary} and Table~\ref{tab:pglib300-summary} show that CAR provides a practical middle ground between non-adaptive and fully adaptive robust dispatch while producing scenario-independent prices. It preserves much of the operational value of fully adaptive recourse and avoids the greater conservatism of non-adaptive decisions. 

Using the same IEEE 300-bus configuration as Table~\ref{tab:pglib300-summary}, Table~\ref{tab:pglib300-settlement-validation} tests Theorem~\ref{thm:current-incentive}: the bus-level LMP leaves $750.84$ USD of current LOC across four generators, while the forward-ramp adder reduces LOC to zero. The largest absolute adder is $8.62$ USD/MWh. This verifies the theorem's local dispatch-following claim, but does not address any system-wide revenue adequacy.

\begin{table}[t]
\centering
\caption{IEEE 300-bus LOC validation at $\sigma_{\mathrm{rel}}=0.004$ and $\Gamma=1.5$.}
\label{tab:pglib300-settlement-validation}
\scriptsize
\begin{tabular}{lccc}
\hline
Settlement rule & Total LOC & Units with LOC & Supported \\
\hline
CAR LMP only & $\$750.84$ & $4$ & No \\
CAR LMP + ramp adder & $0.00$ & $0$ & Yes \\
\hline
\end{tabular}
\end{table}

\section{Conclusion}
\label{sec:conclusion}

This paper addresses a fundamental price-formation problem in adaptive robust look-ahead economic dispatch. Although fully adaptive robust LAED can produce a reliable and cost-efficient dispatch, multiple active worst-case trajectories can imply different marginal values of demand, so the formulation does not always endogenously select a unique, scenario-independent market price. CAR resolves this ambiguity by restricting future recourse to a shared causal affine policy. The resulting convex robust counterpart provides explicit nominal-balance multipliers and, subject to the usual dual-uniqueness qualification, a standard DC LMP decomposition. The associated ramp-adjusted settlement also supports current-period dispatch-following and eliminates current-period lost opportunity cost under the stated assumptions.

The analytical and computational results clarify both the value and the limitations of this restriction. CAR prices collapse to deterministic LAED prices when robustness is inactive or flexibility is abundant, but change when uncertainty interacts with scarce ramping capability or transmission congestion. Across the ten-generator and IEEE 300-bus studies, CAR avoids much of the conservatism of NAR while retaining much of the operational value of fully adaptive recourse and remaining computationally practical. At the same time, the larger CAR--FAR gaps under tight ramping and high uncertainty show that affine recourse is not universally exact and that higher uncertainty can widen the optimality gap. CAR is therefore a price-forming bridge between deterministic market clearing and fully adaptive robust operation.

The present analysis is limited to marginal pricing for economic dispatch under DC power flow, affine recourse decisions, and current-period incentive compatibility. Future work should investigate richer causal policy classes, including piecewise-affine or polynomial rules, that may reduce the CAR--FAR gap while preserving transparent price formation. Additional priorities include multi-period settlement designs that address revenue adequacy, uplift allocation, and strategic bidding. Further work should also extend the framework to AC network constraints and analyze the relationship between adaptive robust dispatch and reserve procurement.

\bibliographystyle{IEEEtran}
\bibliography{references}

\newpage

\onecolumn

\appendices

\section{Full CAR Robust Counterpart Derivation}
\label{app:affine-counterpart}

This appendix gives the explicit causal affine recourse (CAR) reformulation in terms of the support function of the uncertainty set. Keeping the reformulation in support-function form allows different uncertainty sets to be used without changing the dispatch-side reformulation, as described in \cite{bertsimas2022robust}. The notation follows the DC network formulation in Section~\ref{sec:car_marginal_pricing}. We use an affine approximation of the fully adaptive robust problem for decision variables in future intervals by enforcing
\[
x_{g,t+\tau}(\xi)=x^0_{g,t+\tau}+R_{g,t+\tau}\xi
\]
and
\[
s_{t+\tau}(\xi)=s^0_{t+\tau}+S_{t+\tau}\xi,
\]
where $x^0_{g,t+\tau} \in \mathbb{R}$ and $s^0_{t+\tau} \in \mathbb{R}_+^{|\mathcal{N}|}$ are nominal future decisions, while $R_{t+\tau} \in \mathbb{R}^{|\mathcal{G}| \times m}$ and $S_{t+\tau} \in \mathbb{R}^{|\mathcal{N}| \times m}$ are linear recourse matrices. These matrices are constrained causally so that the decisions at time $t+\tau$ depend only on components of $\xi$ revealed by time $t+\tau$. We express this as
\[
(R,S) \in \mathcal{Y}_{\rm causal}.
\]

Here, \(x^0_t \in \mathbb{R}^{|\mathcal G|}\) and \(s^0_t \in \mathbb{R}^{|\mathcal N|}\) are the current-period decisions, \(G \in \mathbb{R}^{|\mathcal N|\times |\mathcal G|}\) is the generator-to-bus incidence matrix, \(\Pi_\ell\) denotes the \(\ell\)-th row of the PTDF matrix, and \(D_{t+\tau}\in\mathbb{R}^{|\mathcal N|\times m}\) maps the primitive uncertainty vector \(\xi\) to the nodal net-load deviation at time \(t+\tau\). We take \(\overline r_g\ge 0\) and \(\underline r_g\ge 0\) to denote the magnitudes of the upward and downward ramp limits, respectively. All vector inequalities are interpreted component-wise. The uncertainty set \(\Xi\) is assumed to be nonempty and compact and to contain the nominal realization \(\xi=0\). The affinely adaptive robust dispatch problem is:
{\footnotesize
\begin{subequations}\label{eq:app-car-semi-infinite}
\begin{align}
\min_{x^0,s^0,R,S,\eta}\quad
& c^\top x^0_t + C^{\rm shed}\mathbf{1}^\top s^0_t + \eta
\label{eq:app-car-obj}\\
\text{s.t.}\quad
& \eta \ge
\sum_{\tau\in\mathcal{W}^+}
\left[
c^\top\left(x^0_{t+\tau}+R_{t+\tau}\xi\right)
+
C^{\rm shed}\mathbf{1}^\top
\left(s^0_{t+\tau}+S_{t+\tau}\xi\right)
\right],
&& \forall \xi\in\Xi,
\label{eq:app-car-eta}\\
& \mathbf{1}^\top
\left(
G x^0_t+s^0_t-\hat d_t
\right)=0,
\label{eq:app-car-current-balance}\\
& \mathbf{1}^\top
\left[
G\left(x^0_{t+\tau}+R_{t+\tau}\xi\right)
+
\left(s^0_{t+\tau}+S_{t+\tau}\xi\right)
-
\left(\hat d_{t+\tau}+D_{t+\tau}\xi\right)
\right]=0,
&& \forall \tau\in\mathcal{W}^+,\ \forall \xi\in\Xi,
\label{eq:app-car-future-balance}\\
& -\overline f_{\ell,t}
\le
\Pi_\ell\left(G x^0_t+s^0_t-\hat d_t\right)
\le
\overline f_{\ell,t},
&& \forall \ell\in\mathcal{L},
\label{eq:app-car-current-flow}\\
& -\overline f_{\ell,t+\tau}
\le
\Pi_\ell
\left[
G\left(x^0_{t+\tau}+R_{t+\tau}\xi\right)
+
\left(s^0_{t+\tau}+S_{t+\tau}\xi\right)
-
\left(\hat d_{t+\tau}+D_{t+\tau}\xi\right)
\right]
\le
\overline f_{\ell,t+\tau},
&& \forall \ell\in\mathcal{L},\
   \forall \tau\in\mathcal{W}^+,\
   \forall \xi\in\Xi,
\label{eq:app-car-future-flow}\\
& \underline x_g
\le x^0_{g,t}
\le \overline x_g,
&& \forall g\in\mathcal{G},
\label{eq:app-car-current-cap}\\
& \underline x_g
\le x^0_{g,t+\tau}+R_{g,t+\tau}\xi
\le \overline x_g,
&& \forall g\in\mathcal{G},\
   \forall \tau\in\mathcal{W}^+,\
   \forall \xi\in\Xi,
\label{eq:app-car-future-cap}\\
& x^0_{g,t}-x^{\rm impl}_{g,t-1}\le \overline{r}_g,
&& \forall g\in\mathcal{G},
\label{eq:app-car-current-ramp-up}\\
& x^{\rm impl}_{g,t-1}-x^0_{g,t}\le \underline{r}_g,
&& \forall g\in\mathcal{G},
\label{eq:app-car-current-ramp-down}\\
& x^0_{g,t+1}+R_{g,t+1}\xi - x^0_{g,t}\le \overline{r}_g,
&& \forall g\in\mathcal{G},\ \forall \xi\in\Xi,
\label{eq:app-car-first-ramp-up}\\
& x^0_{g,t}-x^0_{g,t+1}-R_{g,t+1}\xi\le \underline{r}_g,
&& \forall g\in\mathcal{G},\ \forall \xi\in\Xi,
\label{eq:app-car-first-ramp-down}\\
& x^0_{g,t+\tau}+R_{g,t+\tau}\xi
-
x^0_{g,t+\tau-1}-R_{g,t+\tau-1}\xi
\le \overline{r}_g,
&& \forall g\in\mathcal{G},\
   \forall \tau\in\{2,\ldots,H\},\
   \forall \xi\in\Xi,
\label{eq:app-car-future-ramp-up}\\
& x^0_{g,t+\tau-1}+R_{g,t+\tau-1}\xi
-
x^0_{g,t+\tau}-R_{g,t+\tau}\xi
\le \underline{r}_g,
&& \forall g\in\mathcal{G},\
   \forall \tau\in\{2,\ldots,H\},\
   \forall \xi\in\Xi,
\label{eq:app-car-future-ramp-down}\\
& s^0_t\ge 0,
\label{eq:app-car-current-shortage}\\
& s^0_{n,t+\tau}+S_{n,t+\tau}\xi\ge 0,
&& \forall n\in\mathcal{N},\
   \forall \tau\in\mathcal{W}^+,\
   \forall \xi\in\Xi,
\label{eq:app-car-future-shortage}\\
& (R,S)\in\mathcal{Y}_{\rm causal}.
\label{eq:app-car-causality}
\end{align}
\end{subequations}
}

Define the support function of $\Xi$ by
\[
    \sigma_\Xi(a):=\max_{\xi\in\Xi} a^\top \xi .
\]
The future power-balance constraint is an affine equality in $\xi$:
\[
    \mathbf{1}^{\top}
    \left[
        G\left(x^0_{t+\tau}+R_{t+\tau}\xi\right)
        +\left(s^0_{t+\tau}+S_{t+\tau}\xi\right)
        -\left(\hat d_{t+\tau}+D_{t+\tau}\xi\right)
    \right]=0,
    \qquad \forall \xi\in\Xi .
\]
For coefficient matching, we represent the uncertainty in coordinates spanning the affine hull of \(\Xi\), so that \(\Xi\) is full dimensional in the chosen uncertainty coordinates. Under this convention, the robust equality is equivalent to
\begin{subequations}\label{eq:app-car-balance-matching}
\begin{align}
    \mathbf{1}^\top
    \left(Gx^0_{t+\tau}+s^0_{t+\tau}-\hat d_{t+\tau}\right)
    &=0,
    && \forall \tau\in\mathcal{W}^+,
    \label{eq:app-car-nominal-balance}\\
    \mathbf{1}^\top
    \left(GR_{t+\tau}+S_{t+\tau}-D_{t+\tau}\right)
    &=0,
    && \forall \tau\in\mathcal{W}^+.
    \label{eq:app-car-affine-balance}
\end{align}
\end{subequations}
The first equality is the nominal system-balance equation, while the second requires the affine generation and shortage response to track the uncertainty-induced change in total net load. The degenerate case \(\Xi=\{0\}\) reduces to deterministic LAED, in which only the nominal balance equation is required.

The tractable robust counterpart of \eqref{eq:app-car-semi-infinite} is therefore the following in terms of the support function:

\begin{subequations}
\label{eq:app-car-robust-counterpart}
\begin{align}
\min_{x^0,s^0,R,S,\eta}\quad
& c^\top x^0_t + C^{\rm shed}\mathbf{1}^\top s^0_t + \eta
\label{eq:app-rc-obj}\\
\text{s.t.}\quad
& \sum_{\tau\in\mathcal{W}^+}
\left(c^\top x^0_{t+\tau}
+
C^{\rm shed}\mathbf{1}^\top s^0_{t+\tau}\right)
+
\sigma_\Xi\!\left(
\sum_{\tau\in\mathcal{W}^+}
\left(
R_{t+\tau}^\top c
+
C^{\rm shed} S_{t+\tau}^\top \mathbf{1}
\right)
\right)
\le \eta,
\label{eq:app-rc-eta}\\
& \mathbf{1}^\top
\left(Gx^0_t+s^0_t-\hat d_t\right)=0,
\label{eq:app-rc-current-balance}\\
& \mathbf{1}^\top
\left(Gx^0_{t+\tau}+s^0_{t+\tau}-\hat d_{t+\tau}\right)=0,
&& \forall \tau\in\mathcal{W}^+,
\label{eq:app-rc-future-balance-nom}\\
& \mathbf{1}^\top
\left(GR_{t+\tau}+S_{t+\tau}-D_{t+\tau}\right)=0,
&& \forall \tau\in\mathcal{W}^+,
\label{eq:app-rc-future-balance-aff}\\
& \Pi_\ell
\left(Gx^0_t+s^0_t-\hat d_t\right)
\le \overline f_{\ell,t},
&& \forall \ell\in\mathcal{L},
\label{eq:app-rc-current-flow-up}\\
& -\Pi_\ell
\left(Gx^0_t+s^0_t-\hat d_t\right)
\le \overline f_{\ell,t},
&& \forall \ell\in\mathcal{L},
\label{eq:app-rc-current-flow-down}\\
& \Pi_\ell
\left(Gx^0_{t+\tau}+s^0_{t+\tau}-\hat d_{t+\tau}\right)
+
\sigma_\Xi\!\left(
\left(GR_{t+\tau}+S_{t+\tau}-D_{t+\tau}\right)^\top
\Pi_\ell^\top
\right)
\le \overline f_{\ell,t+\tau},
&& \forall \ell\in\mathcal{L},\
   \forall \tau\in\mathcal{W}^+,
\label{eq:app-rc-future-flow-up}\\
& -\Pi_\ell
\left(Gx^0_{t+\tau}+s^0_{t+\tau}-\hat d_{t+\tau}\right)
+
\sigma_\Xi\!\left(
-\left(GR_{t+\tau}+S_{t+\tau}-D_{t+\tau}\right)^\top
\Pi_\ell^\top
\right)
\le \overline f_{\ell,t+\tau},
&& \forall \ell\in\mathcal{L},\
   \forall \tau\in\mathcal{W}^+,
\label{eq:app-rc-future-flow-down}\\
& \underline x_g
\le x^0_{g,t}
\le \overline x_g,
&& \forall g\in\mathcal{G},
\label{eq:app-rc-current-cap}\\
& x^0_{g,t+\tau}+\sigma_\Xi(R_{g,t+\tau}^\top)
\le \overline x_g,
&& \forall g\in\mathcal{G},\
   \forall \tau\in\mathcal{W}^+,
\label{eq:app-rc-future-cap-up}\\
& \underline x_g+\sigma_\Xi(-R_{g,t+\tau}^\top)
\le x^0_{g,t+\tau},
&& \forall g\in\mathcal{G},\
   \forall \tau\in\mathcal{W}^+,
\label{eq:app-rc-future-cap-down}\\
& x^0_{g,t}-x^{\rm impl}_{g,t-1}\le \overline{r}_g,
&& \forall g\in\mathcal{G},
\label{eq:app-rc-current-ramp-up}\\
& x^{\rm impl}_{g,t-1}-x^0_{g,t}\le \underline{r}_g,
&& \forall g\in\mathcal{G},
\label{eq:app-rc-current-ramp-down}\\
& x^0_{g,t+1}-x^0_{g,t}
+
\sigma_\Xi(R_{g,t+1}^\top)
\le \overline{r}_g,
&& \forall g\in\mathcal{G},
\label{eq:app-rc-first-ramp-up}\\
& x^0_{g,t}-x^0_{g,t+1}
+
\sigma_\Xi(-R_{g,t+1}^\top)
\le \underline{r}_g,
&& \forall g\in\mathcal{G},
\label{eq:app-rc-first-ramp-down}\\
& x^0_{g,t+\tau}-x^0_{g,t+\tau-1}
+
\sigma_\Xi\!\left(
(R_{g,t+\tau}-R_{g,t+\tau-1})^\top
\right)
\le \overline{r}_g,
&& \forall g\in\mathcal{G},\
   \forall \tau\in\{2,\ldots,H\},
\label{eq:app-rc-future-ramp-up}\\
& x^0_{g,t+\tau-1}-x^0_{g,t+\tau}
+
\sigma_\Xi\!\left(
(R_{g,t+\tau-1}-R_{g,t+\tau})^\top
\right)
\le \underline{r}_g,
&& \forall g\in\mathcal{G},\
   \forall \tau\in\{2,\ldots,H\},
\label{eq:app-rc-future-ramp-down}\\
& s^0_t\ge 0,
\label{eq:app-rc-current-shortage}\\
& \sigma_\Xi(-S_{n,t+\tau}^\top)\le s^0_{n,t+\tau},
&& \forall n\in\mathcal{N},\
   \forall \tau\in\mathcal{W}^+,
\label{eq:app-rc-future-shortage}\\
& (R,S)\in\mathcal{Y}_{\rm causal}.
\label{eq:app-rc-causality}
\end{align}
\end{subequations}

For a nonempty compact polyhedral uncertainty set
\[
    \Xi=\{\xi:H_\Xi \xi\le h_\Xi\},
\]
each support-function term can be replaced by the linear dual representation
\[
    \sigma_\Xi(a)
    =
    \min_{\lambda\ge 0}
    \left\{
        h_\Xi^\top \lambda:
        H_\Xi^\top \lambda=a
    \right\}.
\]
A separate dual vector \(\lambda\) is introduced for each occurrence of the support function. The auxiliary innovation variables used to define the dynamic-budget uncertainty set need not be projected out explicitly; equivalently, the support-function maximization can be dualized directly in the corresponding lifted polyhedral representation. Substituting the appropriate dual representation for every occurrence of $\sigma_\Xi(\cdot)$ in \eqref{eq:app-car-robust-counterpart} gives a finite-dimensional linear program whenever the cost and network model are linear. This is the tractable robust counterpart used for the CAR computational experiments.

\end{document}